\newtheorem{theorem}{Theorem}[section]
\newtheorem{prop}[theorem]{Proposition}
\newtheorem{lemma}[theorem]{Lemma}
\newtheorem{coro}[theorem]{Corollary}
\newtheorem{prop-def}{Proposition-Definition}[section]
\theoremstyle{definition}
\newtheorem{defn}[theorem]{Definition}
\newtheorem{remark}[theorem]{Remark}
\newtheorem{exam}[theorem]{Example}
\newcommand{\nc}{\newcommand}
\newcommand {\emptycomment}[1]{}
\nc{\delete}[1]{{}}
\nc{\mmargin}[1]{}
\nc{\mlabel}[1]{\label{#1}}  
\nc{\mcite}[1]{\cite{#1}}  
\nc{\mref}[1]{\ref{#1}}  
\nc{\meqref}[1]{\eqref{#1}}  
\nc{\mbibitem}[1]{\bibitem{#1}} 
	\nc{\mlabel}[1]{\label{#1}  
		{\hfill \hspace{1cm}{\bf{{\ }\hfill(#1)}}}}
	\nc{\mcite}[1]{\cite{#1}{{\bf{{\ }(#1)}}}}  
	\nc{\mref}[1]{\ref{#1}{{\bf{{\ }(#1)}}}}  
	\nc{\meqref}[1]{\eqref{#1}{{\bf{{\ }(#1)}}}}  
	\nc{\mbibitem}[1]{\bibitem[\bf #1]{#1}} 
\newcommand{\g}{\mathfrak g}
 \font\cyrs=wncyr7
\nc{\vep}{\varepsilon}
\nc{\bin}[2]{ (_{\stackrel{\scs{#1}}{\scs{#2}}})}  
\nc{\binc}[2]{(\!\! \begin{array}{c} \scs{#1}\\
		\scs{#2} \end{array}\!\!)}  
\nc{\bincc}[2]{  ( {\scs{#1} \atop
		\vspace{-1cm}\scs{#2}} )}  
\nc{\oline}[1]{\overline{#1}}
\nc{\mapm}[1]{\lfloor\!|{#1}|\!\rfloor}
\nc{\bs}{\bar{S}}
\nc{\cast}{{\,\mbox{\raisebox{.8pt}{$\scriptstyle \circledast$}}\,}}
\nc{\la}{\longrightarrow}
\nc{\hot}{\widehat{\otimes}}
\nc{\ot}{\otimes}
\nc{\rar}{\rightarrow}
\nc{\dar}{\downarrow}
\nc{\dap}[1]{\downarrow \rlap{$\scriptstyle{#1}$}}
\nc{\defeq}{\stackrel{\rm def}{=}}
\nc{\dis}[1]{\displaystyle{#1}}
\nc{\dotcup}{\ \displaystyle{\bigcup^\bullet}\ }
\nc{\hcm}{\ \hat{,}\ }
\nc{\hts}{\hat{\otimes}}
\nc{\hcirc}{\hat{\circ}}
\nc{\lleft}{[}
\nc{\lright}{]}
\nc{\curlyl}{\left \{ \begin{array}{c} {} \\ {} \end{array}
	\right .  \!\!\!\!\!\!\!}
\nc{\curlyr}{ \!\!\!\!\!\!\!
	\left . \begin{array}{c} {} \\ {} \end{array}
	\right \} }
\nc{\longmid}{\left | \begin{array}{c} {} \\ {} \end{array}
	\right . \!\!\!\!\!\!\!}
\nc{\ora}[1]{\stackrel{#1}{\rar}}
\nc{\ola}[1]{\stackrel{#1}{\la}}
\nc{\scs}[1]{\scriptstyle{#1}} \nc{\mrm}[1]{{\rm #1}}
\nc{\dirlim}{\displaystyle{\lim_{\longrightarrow}}\,}
\nc{\invlim}{\displaystyle{\lim_{\longleftarrow}}\,}
\nc{\dislim}[1]{\displaystyle{\lim_{#1}}} \nc{\colim}{\mrm{colim}}
\nc{\mvp}{\vspace{0.3cm}} \nc{\tk}{^{(k)}} \nc{\tp}{^\prime}
\nc{\ttp}{^{\prime\prime}} \nc{\svp}{\vspace{2cm}}
\nc{\vp}{\vspace{8cm}}
\nc{\modg}[1]{\!<\!\!{#1}\!\!>}
\nc{\intg}[1]{F_C(#1)}
\nc{\lmodg}{\!<\!\!}
\nc{\rmodg}{\!\!>\!}
\nc{\cpi}{\widehat{\Pi}}
\nc{\ssha}{{\mbox{\cyrs X}}} 
\nc{\tsha}{{\mbox{\cyrt X}}}
\nc{\shpr}{\diamond}    
\nc{\labs}{\mid\!}
\nc{\rabs}{\!\mid}
\nc{\btr}{\blacktriangleright}
\nc{\ad}{\mrm{ad}}
\nc{\rRB}{\mathsf{rRB}}
\nc{\cocrRB}{\mathsf{cocrRB}}
\nc{\PH}{\mathsf{PH}}
\nc{\cocPH}{\mathsf{cocPH}}
\nc{\ann}{\mrm{ann}}
\nc{\Aut}{\mrm{Aut}}
\nc{\Der}{\mrm{Der}}
\nc{\Sym}{\mrm{Sym}}
\nc{\br}{\mrm{bre}}
\nc{\can}{\mrm{can}}
\nc{\Cont}{\mrm{Cont}}
\nc{\rchar}{\mrm{char}}
\nc{\cok}{\mrm{coker}}
\nc{\de}{\mrm{dep}}
\nc{\dtf}{{R-{\rm tf}}}
\nc{\dtor}{{R-{\rm tor}}}
\nc{\Dif}{\mrm{Diff}}
\nc{\Div}{\mrm{Div}}
\nc{\End}{\mrm{End}}
\nc{\Ext}{\mrm{Ext}}
\nc{\Fil}{\mrm{Fil}}
\nc{\Fr}{\mrm{Fr}}
\nc{\Frob}{\mrm{Frob}}
\nc{\Gal}{\mrm{Gal}}
\nc{\GL}{\mrm{GL}}
\nc{\Gr}{\mrm{Gr}}
\nc{\Hom}{\mrm{Hom}}
\nc{\Hoch}{\mrm{Hoch}}
\nc{\hsr}{\mrm{H}}
\nc{\hpol}{\mrm{HP}}
\nc{\id}{\mrm{id}}
\nc{\im}{\mrm{im}}
\nc{\inv}{\mrm{inv}}
\nc{\Id}{\mrm{Id}}
\nc{\ID}{\mrm{ID}}
\nc{\Irr}{\mrm{Irr}}
\nc{\incl}{\mrm{incl}}
\nc{\length}{\mrm{length}}
\nc{\NLSW}{\mrm{NLSW}}
\nc{\Lie}{\mrm{Lie}}
\nc{\mchar}{\rm char}
\nc{\mpart}{\mrm{part}}
\nc{\ql}{{\QQ_\ell}}
\nc{\qp}{{\QQ_p}}
\nc{\rank}{\mrm{rank}}
\nc{\rcot}{\mrm{cot}}
\nc{\rdef}{\mrm{def}}
\nc{\rdiv}{{\rm div}}
\nc{\rtf}{{\rm tf}}
\nc{\rtor}{{\rm tor}}
\nc{\res}{\mrm{res}}
\nc{\SL}{\mrm{SL}}
\nc{\Spec}{\mrm{Spec}}
\nc{\tor}{\mrm{tor}}
\nc{\Tr}{\mrm{Tr}}
\nc{\tr}{\mrm{tr}}
\nc{\wt}{\mrm{wt}}
\nc{\bfk}{{\bf k}}
\nc{\bfone}{{\bf 1}}
\nc{\bfzero}{{\bf 0}}
\nc{\detail}{\marginpar{\bf More detail}
	\noindent{\bf Need more detail!}
	\svp}
\nc{\gap}{\marginpar{\bf Incomplete}\noindent{\bf Incomplete!!}
	\svp}
\nc{\FMod}{\mathbf{FMod}}
\nc{\Int}{\mathbf{Int}}
\nc{\Mon}{\mathbf{Mon}}
\nc{\remarks}{\noindent{\bf Remarks: }}
\nc{\Rep}{\mathbf{Rep}}
\nc{\Rings}{\mathbf{Rings}}
\nc{\Sets}{\mathbf{Sets}}
\nc{\Diff}{\mathbf{Diff}}
\nc{\Inte}{\mathbf{Inte}}
\nc{\U}{\mathbf{U}}
\nc{\BA}{{\mathbb A}}   \nc{\CC}{{\mathbb C}}
\nc{\DD}{{\mathbb D}}   \nc{\EE}{{\mathbb E}}
\nc{\FF}{{\mathbb F}}   \nc{\GG}{{\mathbb G}}
\nc{\HH}{{\mathbb H}}   \nc{\LL}{{\mathbb L}}
\nc{\NN}{{\mathbb N}}   \nc{\PP}{{\mathbb P}}
\nc{\QQ}{{\mathbb Q}}   \nc{\RR}{{\mathbb R}}
\nc{\TT}{{\mathbb T}}   \nc{\VV}{{\mathbb V}}
\nc{\ZZ}{{\mathbb Z}}   \nc{\TP}{\widetilde{P}}
\newcommand{\huaO}{{\mathcal{O}}}
\nc{\cala}{{\mathcal A}}    \nc{\calc}{{\mathcal C}}
\nc{\calb}{{\mathcal B}}
\nc{\cald}{\mathcal{D}}     \nc{\cale}{{\mathcal E}}
\nc{\calf}{{\mathcal F}}    \nc{\calg}{{\mathcal G}}
\nc{\calh}{{\mathcal H}}    \nc{\cali}{{\mathcal I}}
\nc{\call}{{\mathcal L}}    \nc{\calm}{{\mathcal M}}
\nc{\caln}{{\mathcal N}}    \nc{\calo}{{\mathcal O}}
\nc{\calp}{{\mathcal P}}    \nc{\calr}{{\mathcal R}}
\nc{\cals}{{\mathcal S}}    \nc{\calt}{{\mathcal T}}
\nc{\calv}{{\mathcal V}}    \nc{\calw}{{\mathcal W}}
\nc{\calx}{{\mathcal X}}
\nc{\fraka}{{\mathfrak a}}
\nc{\frakb}{\mathfrak{b}}
\nc{\frake}{{\frak e}}
\nc{\frakg}{{\frak g}}
\nc{\frakh}{{\frak h}}
\nc{\fraki}{{\frak i}}
\nc{\frakl}{{\frak l}}
\nc{\fraks}{{\frak s}}
\nc{\frakB}{{\frak B}}
\nc{\frakm}{{\frak m}}
\nc{\frakM}{{\frak M}}
\nc{\frakp}{{\frak p}}
\nc{\frakW}{{\frak W}}
\nc{\frakX}{{\frak X}}
\nc{\frakS}{{\frak S}}
\nc{\frakA}{{\frak A}}
\nc{\frakG}{{\frak G}}
\nc{\frakP}{{\frak P}}
\nc{\frakx}{{\frakx}}
\nc{\ynr}[1]{\textcolor{orange}{\underline{Yunnan:}#1 }}
\nc{\lir}[1]{\textcolor{red}{\underline{Li:}#1 }}
	\newtheorem{theorem}{Theorem}[section]
	\theoremstyle{definition}
	\theoremstyle{remark}
	\newtheorem{remark}[theorem]{Remark}
	\numberwithin{equation}{section}
\begin{document}

\title[Rota-Baxter operators of weight 0 on groups]{Relative Rota-Baxter operators of weight 0 on groups, pre-groups, braces, the Yang-Baxter equation and $T$-structures}

\author{Yunnan Li}
\address{School of Mathematics and Information Science, Guangzhou University,
Guangzhou 510006, China}
\email{ynli@gzhu.edu.cn}

\author{Yunhe Sheng}
\address{Department of Mathematics, Jilin University, Changchun 130012, Jilin, China}
\email{shengyh@jlu.edu.cn}

\author{Rong Tang}
\address{Department of Mathematics, Jilin University, Changchun 130012, Jilin, China}
\email{tangrong@jlu.edu.cn}


\begin{abstract}
In this paper, we study relative Rota-Baxter operators of weight $0$ on groups and give various examples. In particular, we propose different approaches to study Rota-Baxter operators of weight $0$ on groups and Lie groups. We establish various explicit relations among relative Rota-Baxter operators of weight $0$ on groups, pre-groups, braces, set-theoretic solutions
of the Yang-Baxter equation and $T$-structures.
\end{abstract}

\keywords{relative Rota-Baxter operator, pre-group, brace, the Yang-Baxter equation, $T$-structure\\
\qquad 2020 Mathematics Subject Classification. 16T25,   17B38}

\maketitle

\tableofcontents

\allowdisplaybreaks

\section{Introduction}

The concept of Rota-Baxter operators on associative algebras was
introduced in 1960 by G. Baxter \cite{Ba} in his study of
fluctuation theory in probability. Recently it has found many
applications, including in Connes-Kreimer's~\cite{CK} algebraic
approach to the renormalization in perturbative quantum field
theory. In the Lie algebra context, a Rota-Baxter operator of
weight $0$ was introduced independently in the 1980s as the
operator form of the classical Yang-Baxter equation \cite{STS}. For further details on
Rota-Baxter operators, see~\cite{Gub}.
The more general notion of a relative Rota-Baxter operator of weight $0$ (originally  called
$\huaO$-operator)
on a Lie algebra was introduced by Kupershmidt~\cite{Ku} to better understand the classical Yang-Baxter equation. This structure can be traced back to Bordemann's work on integrable systems \cite{Bor}. A  Rota-Baxter operator of weight 0 naturally gives rise to pre-Lie algebras \cite{Aguiar}. The notion of Rota-Baxter operators of weight $1$ was introduced by Bai, Guo and Ni with applications in Lax pairs and post-Lie algebras \cite{BGN}.

In \cite{GLS}, the notion of Rota-Baxter operators of weight $\pm1$ on groups was introduced, with applications to factorizations of groups. The differentiation of a  Rota-Baxter operator  of  weight $1$ on a Lie group yields a Rota-Baxter operator  of  weight $ 1$ on the associated Lie algebra. Motivated by this structure, Goncharov introduced the notion of a Rota-Baxter operator on a Hopf algebra, and open a new research area~\cite{Go}. It was observed by Bardakov and Gubarev in \cite{BG} that  Rota-Baxter operators of weight $1$ on groups induces skew left braces and solutions of the Yang-Baxter equation. See \cite{CS} for more studies on skew left braces and Rota-Baxter operators. The more general notion of a relative Rota-Baxter operator on a group with respect to an action on another group was introduced in \cite{JSZ}, and further characterized by matched pairs of groups with applications in post-groups and the Yang-Baxter equation \cite{BGST}.

Note that in the Lie algebra context, (relative) Rota-Baxter operators of weight 0 are of great interest, while in the group context, recent studies on (relative) Rota-Baxter operators on groups mainly focus on the weight $1$ case. Considering the category of relative Rota-Baxter operators on groups, it is very easy to obtain the correct notion of a relative Rota-Baxter operator of weight $0$ by requiring the source group to be abelian. Nevertheless, this approach can not be applied to define what is a Rota-Baxter operator of weight $0$ on a group. So how to define a Rota-Baxter operator of weight $0$ is still an open question.
The first purpose of this paper is to solve this question. We propose two different approaches, one for arbitrary groups and one for Lie groups. More precisely, we introduce Rota-Baxter operators of weight $0$ on an arbitrary group $G$ using the algebraic construction of the Malcev completion of a group due to Quillen~\cite{Qu}. For a Lie group $G$, a Rota-Baxter operator of weight $0$ is simply defined to be a relative Rota-Baxter operator from $\g$ to $G$ with respect to the adjoint action, where $\g$ is the Lie algebra of $G$. The relation between these two approaches is still mysterious to us.

To construct solutions of the Yang-Baxter equation, Rump introduced  the notion of braces in~\cite{Ru}, which is   a generalization of radical rings. Further studies were carried out in~\cite{CJO,G,Sm18}.
Recently, braces were generalized to skew left braces by Guarnieri and Vendramin in \cite{GV} to construct
non-degenerate and not necessarily involutive  solutions of the Yang-Baxter equation. 
As shown in \cite[Proposition 3.1]{BG}, any Rota-Baxter group $G$ of weight $1$ naturally produces a skew left brace structure on $G$. When is it especially a brace? A naive answer is given by requiring $G$ to be an abelian group. But the situation becomes frustrating. As if $G$ is abelian, any Rota-Baxter operator of weight $1$ on $G$ degenerates to be an abelian group endomorphism, and we only get the trivial brace.
In order to obtain nontrivial braces through this approach, we need to consider (relative) Rota-Baxter operators of weight $0$ on groups. This is another motivation for us to study the weight 0 case.
Note  that Bardakov and Nikonov introduced Rota-Baxter operators of weight $0$ on Lie groups from another perspective quite recently in  ~\cite{BN}.

As aforementioned, braces can be used to construct  solutions of the Yang-Baxter equation, so relative Rota-Baxter operators of weight 0 can also give rise to solutions of the Yang-Baxter equation. On the other hand, a non-degenerate involutive set-theoretic solution of the
Yang-Baxter equation also naturally gives rise to a relative Rota-Baxter operator of weight 0.

The paper is organized as follows. In Section \ref{sec:rRB}, we revisit relative Rota-Baxter operators on groups and give various examples. In Section \ref{sec:RB}, we propose two different approaches to define Rota-Baxter operators of weight 0 on groups and Lie groups. In Section \ref{sec:pre}, we establish the relation between relative Rota-Baxter operators of weight 0 and pre-groups. In Section \ref{sec:bra}, we show that a relative Rota-Baxter of weight 0 induces a brace, and we also answer the question when a brace is induced from a relative Rota-Baxter operator of weight 0. In Section \ref{sec:YBE}, we construct set-theoretic solutions of the Yang-Baxter equation from a relative Rota-Baxter operator of weight 0.  In Section \ref{sec:T}, we show that a relative Rota-Baxter operator of weight 0 naturally gives rise to a $T$-structure.

\section{Relative Rota-Baxter operators of weight 0 on groups and examples}\label{sec:rRB}
In this section, we revisit relative Rota-Baxter operator of weight 0 and give various examples.

\begin{defn}
Given a group $G$ and an abelian group $(V,+)$ with a group homomorphism $\Phi:G\to \Aut(V)$,
namely $(V,\Phi)$ is a $\mathbb Z G$-module, a map $\calr:V\to G$ is called a {\bf relative Rota-Baxter operator of weight 0} on $G$ with respect to $(V,\Phi)$, if
\begin{equation}\label{eq:RRB0}
\calr(u)\calr(v)=\calr(u+\Phi(\calr(u))v),\quad \forall u,v\in V.
\end{equation}
\end{defn}

Eq.~\eqref{eq:RRB0} is equivalent to the following condition:
\begin{equation}\label{eq:RRB0'}
\calr(u)\calr(\Phi(\calr(u)^{-1})v)=\calr(u+v),\quad \forall u,v\in V.
\end{equation}
In particular, we have
\begin{eqnarray}
\label{eq:rrb-unit}
\calr(0)&=&e_G,\\
\label{eq:rrb-minus}
\calr(-u) &=& \calr\left(\Phi(\calr(u)^{-1})u\right)^{-1},\quad\forall u\in V,\\
\label{eq:rrb-inverse}
\calr(u)^{-1} &=& \calr\left(-\Phi(\calr(u)^{-1})u\right),\quad\forall u\in V.
\end{eqnarray}
Also, the commutativity of $(V,+)$ implies the following commutation relation:
$$\calr(u)\calr(\Phi(\calr(u)^{-1})v)=\calr(v)\calr(\Phi(\calr(v)^{-1})u),\quad \forall u,v\in V.$$

\begin{prop}\label{prop:rrb-graph}
Given a group $G$ and a $\mathbb Z G$-module $(V,\Phi)$, a map $\calr:V\to G$ is a relative Rota-Baxter operator if and only if the graph of $\calr$
$${\rm Graph}(\calr)=\{(u,\calr(u))\,|\,u\in V\}$$
is a subgroup of the semidirect product $V\rtimes_\Phi G$.
\end{prop}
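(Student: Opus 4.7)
The plan is a direct unpacking of the semidirect product structure on $V \rtimes_\Phi G$, whose multiplication is
\[(u_1, g_1)(u_2, g_2) = (u_1 + \Phi(g_1) u_2,\ g_1 g_2).\]
For $u, v \in V$, computing
\[(u, \calr(u))(v, \calr(v)) = (u + \Phi(\calr(u)) v,\ \calr(u)\calr(v))\]
shows that this product lies in ${\rm Graph}(\calr)$ if and only if its second coordinate equals $\calr$ applied to its first coordinate, i.e.\ if and only if equation \eqref{eq:RRB0} holds at $(u,v)$. Hence closure of ${\rm Graph}(\calr)$ under multiplication is equivalent to the relative Rota-Baxter identity holding on all of $V \times V$.

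For the forward implication, closure under multiplication is thereby established, and the remaining subgroup axioms follow from identities already observed after the definition: the element $(0, e_G)$ lies in ${\rm Graph}(\calr)$ by \eqref{eq:rrb-unit}, and closure under inversion, using the formula $(u, g)^{-1} = (-\Phi(g^{-1}) u,\ g^{-1})$ in $V \rtimes_\Phi G$, is exactly the content of \eqref{eq:rrb-inverse}. For the converse, a subgroup is in particular closed under multiplication, so \eqref{eq:RRB0} drops out of the same product computation, and nothing further is required.

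The argument is essentially bookkeeping with the semidirect product rule, and I do not anticipate any real obstacle. The only subtle point worth flagging is to recognize that \eqref{eq:rrb-unit} and \eqref{eq:rrb-inverse} are consequences of \eqref{eq:RRB0} rather than independent requirements, so that the full subgroup condition collapses to the single functional equation defining a relative Rota-Baxter operator of weight $0$.
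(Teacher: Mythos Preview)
Your proof is correct and follows essentially the same approach as the paper's, which simply computes the product $(u,\calr(u))(v,\calr(v)) = (u+\Phi(\calr(u))v,\calr(u)\calr(v))$ and observes that ${\rm Graph}(\calr)$ being a subgroup is equivalent to \eqref{eq:RRB0}. Your treatment is slightly more thorough in explicitly addressing the identity and inverse via \eqref{eq:rrb-unit} and \eqref{eq:rrb-inverse}, whereas the paper leaves these implicit.
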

\begin{proof}
For all $u,v\in V$, we have
$$
(u,\calr(u))(v,\calr(v)) = (u+\Phi(\calr(u))v,\calr(u)\calr(v)),
$$
which implies that ${\rm Graph}(\calr)$ is subgroup if and only if Eq.~\eqref{eq:RRB0} holds for $\calr$.
\end{proof}


\begin{exam}\label{ex:Z}
Obviously,  any $G$-module structure on $\mathbb Z$ is given by a group homomorphism $\chi:G\to \mathbb Z^*=\{\pm1\}$. Then a map $t:\mathbb Z\to G$ is a relative Rota-Baxter operator of weight $0$ on $G$ with respect to $\chi$ if and only if
$$t_mt_n=t_{m+\chi(t_m)n},\quad \forall m,n\in\mathbb Z. $$

In particular, if we also take $G=\mathbb Z$, there are only two group homomorphisms $\chi_1,\, \chi_2$ from $\mathbb Z$ to $\{\pm1\}$, namely
$$\chi_1(n)=1,\quad \chi_2(n)=(-1)^n,\quad \forall n\in \mathbb Z.$$
Then any relative Rota-Baxter operator $t:\mathbb Z\to \mathbb Z$
is easily checked to be
a group endomorphism of $\mathbb Z$.
\end{exam}


Next we give a simple class of non-bijective relative Rota-Baxter operators.

\begin{exam}\label{ex:semi-direct}
Given an abelian group $(G,\cdot)$ and a $\mathbb Z G$-module $(A,+,\rho)$, let $V$ be the direct product $A\times G$ with the following group action $\Phi:G\to\Aut(V)$,
$$\Phi(x)(a,y)=(\rho(x)a,y),\quad\forall a\in A,\,x,y\in G.$$
Then the natural projection $\calr:V\to G,\ (a,x)\mapsto x$ is clearly a relative Rota-Baxter operator.
\end{exam}

In the sequel, we study two kinds of relative Rota-Baxter operators of weight 0 on the symmetric group $S_n$ with respect to its sign representation and its permutation representation respectively.
\begin{exam}
Let $(\mathbb Z,\Phi)$ be the sign representation of $\mathbb Z S_n$, that is,
$$\Phi(\sigma)i=(-1)^{\ell(\sigma)}i,\quad \forall \sigma\in S_n,\,i\in\mathbb Z.$$
According to Eq.~\eqref{eq:RRB0'}, a map $\mathbb Z\to S_n$ is a relative Rota-Baxter operator if and only if
$$\calr(i)\calr((-1)^{\ell(\calr(i))}j)=\calr\left(i+j\right)=\calr(j)\calr((-1)^{\ell(\calr(j))}i)
,\quad \forall i,j\in\mathbb Z.$$

(a) If $\calr(1)$ is a permutation of odd length, then
 $\calr(\pm1)$ are involutions in $S_n$, as
$$\calr(1)\calr\left((-1)^{\ell(\calr(1))+1}\right)=\calr(-1)\calr\left((-1)^{\ell(\calr(-1))}\right)=\calr(0)=1$$
by taking $i=1$ and $j=-1$ in the above condition. More generally, since
$$\calr(1)\calr(-k)=\calr(k+1)=\calr(k)\calr\left((-1)^{\ell(\calr(k))}\right),\quad \forall k\in\mathbb Z,$$
we know that the parity of $\ell(\calr(k))$ is the same as that of $k$. Hence,
$$\calr(1)\calr(-k)=\calr(k+1)=\calr(k)\calr\left((-1)^k\right),\quad \forall k\in\mathbb Z.$$
Similarly, we have
$$\calr(-1)\calr(k)=\calr(-k-1)=\calr(-k)\calr\left((-1)^{-k-1}\right),\quad \forall k\in\mathbb Z.$$
Therefore, we obtain by induction that
$$\calr(k)=
\begin{cases}
\calr(1)\calr(-1)\cdots \calr((-1)^{k-1}),& k>0,\\
1,&n=0,\\
\calr(-1)\calr(1)\cdots \calr((-1)^{-k}),& k<0.
\end{cases}$$
Any choice of involutions for $\calr(\pm 1)$ together with such a formula of $\calr(k)$ define a relative Rota-Baxter operator $\calr:\mathbb Z\to S_n$.

(b) If $\calr(1)$ is a permutation of even length, then
$\calr(k)=\calr(1)^k$ for any $k\in\mathbb Z$. Namely, $\calr$ becomes a group homomorphism from $(\mathbb Z,+)$ to $S_n$ factoring through the alternating group $A_n$.

\end{exam}

\begin{theorem}\label{thm:perm_repn}
Let $V_n=(\mathbb Z^n,\Phi)$ be the permutation representation of $S_n$ with the standard basis $\{e_i\}_{1\leq i\leq n}$, namely
$$\Phi(w)e_i=e_{w(i)},\quad \forall w\in S_n,$$
then any two tuples $\sigma=(\sigma_i)_{1\leq i\leq n}$ and $\bar\sigma=(\bar\sigma_i)_{1\leq i\leq n}$ of permutations in $S_n$ uniquely determine a relative Rota-Baxter operator $\calr_{\sigma,\,\bar\sigma}:V_n\to S_n$ such that $\calr_{\sigma,\,\bar\sigma}(e_i)=\sigma_i$ and $\calr_{\sigma,\,\bar\sigma}(-e_i)=\bar\sigma_i$ for $1\leq i\leq n$, if they satisfy
\begin{equation}\label{eq:perm_repn}
\begin{cases}
\sigma_i\bar\sigma_{\sigma_i^{-1}(i)}=\bar\sigma_i\sigma_{\bar\sigma_i^{-1}(i)}=1,& 1\leq i\leq n,\\
\sigma_j\sigma_{\sigma_j^{-1}(k)}=\sigma_k\sigma_{\sigma_k^{-1}(j)},& 1\leq j<k\leq n,\\
\bar\sigma_j\bar\sigma_{\bar\sigma_j^{-1}(k)}=\bar\sigma_k\bar\sigma_{\bar\sigma_k^{-1}(j)},& 1\leq j<k\leq n,\\
\sigma_j\bar\sigma_{\sigma_j^{-1}(k)}=\bar\sigma_k\sigma_{\bar\sigma_k^{-1}(j)},& 1\leq j,k\leq n.
\end{cases}
\end{equation}

In particular, any tuple $\sigma=(\sigma_i)_{1\leq i\leq n}$ of permutations in $S_n$ satisfying
\begin{equation}\label{eq:perm_repn'}
\begin{cases}
\sigma_i\sigma_{\sigma_i^{-1}(i)}=1,& 1\leq i\leq n,\\
\sigma_j\sigma_{\sigma_j^{-1}(k)}=\sigma_k\sigma_{\sigma_k^{-1}(j)},& 1\leq j<k\leq n,
\end{cases}
\end{equation}
defines a relative Rota-Baxter operator $\calr_\sigma=\calr_{\sigma,\,\sigma}:V_n\to S_n$.
\end{theorem}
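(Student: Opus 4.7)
My plan is a confluence/diamond argument, with the graph formulation of Proposition~\ref{prop:rrb-graph} as the conceptual backbone. For \emph{necessity}, assume $\calr = \calr_{\sigma,\bar\sigma}$ exists. Specialising \eqref{eq:RRB0'} to $(u,v) = (e_i,-e_i)$ and using $\Phi(\sigma_i^{-1})(-e_i) = -e_{\sigma_i^{-1}(i)}$ gives $\sigma_i\bar\sigma_{\sigma_i^{-1}(i)} = \calr(0) = 1$; the second half of the first row of \eqref{eq:perm_repn} follows from $(-e_i,e_i)$. For $1 \leq j < k \leq n$, applying \eqref{eq:RRB0'} to $(e_j,e_k)$ and to $(e_k,e_j)$ writes $\calr(e_j+e_k)$ in two ways, and the commutativity of $V_n$ yields the second row; the third and fourth rows arise analogously from $(-e_j,-e_k)$ and $(e_j,-e_k)$.

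Conversely, suppose \eqref{eq:perm_repn} holds. The identity \eqref{eq:RRB0'} applied with $v = \pm e_i$ forces the recursion
\[
\calr(u+e_i) = \calr(u)\,\sigma_{\calr(u)^{-1}(i)},\qquad \calr(u-e_i) = \calr(u)\,\bar\sigma_{\calr(u)^{-1}(i)},
\]
which together with $\calr(0) = 1$ determines $\calr$ uniquely on $V_n$. I would \emph{adopt} this recursion as the definition of a candidate $\calr$ and verify that it is well-defined by a diamond argument on $V_n$, viewed as the free abelian group on $\{e_1,\ldots,e_n\}$. The main work is to show that whenever two elementary $\pm e_i$-moves converge on the same $v \in V_n$, the two computed values of $\calr(v)$ agree. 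There are four types of local diamonds: the trivial one $u \to u \pm e_i \to u$ collapses via the first row of \eqref{eq:perm_repn}; and the three commuting diamonds in which $u$ is altered by $\epsilon_1 e_j$ and $\epsilon_2 e_k$ in either order, with $(\epsilon_1,\epsilon_2) \in \{(+,+),(-,-),(+,-)\}$, reduce---after substituting $p := \calr(u)^{-1}(j)$ and $q := \calr(u)^{-1}(k)$---to the second, third, or fourth row respectively. The second and third rows are symmetric in their two indices so the hypothesis $j<k$ is cost-free; the fourth is stated for all pairs because both mixed diamonds $(+,-)$ and $(-,+)$ reduce to it, with the roles of $j$ and $k$ swapped in one case. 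The main obstacle is the index bookkeeping required to match each diamond with the correct row of \eqref{eq:perm_repn}; once written out, the match is immediate.

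Once $\calr$ is well-defined, the full identity \eqref{eq:RRB0} follows by induction on $|k_1|+\cdots+|k_n|$, as each inductive step is the recursion used to define $\calr$; equivalently, the subgroup $H \leq V_n \rtimes_\Phi S_n$ generated by $\{(e_i,\sigma_i),(-e_i,\bar\sigma_i) : 1\leq i\leq n\}$ surjects onto $V_n$ under the first projection, and the diamond argument shows this projection is injective, whence $H$ is the graph of a relative Rota-Baxter operator by Proposition~\ref{prop:rrb-graph}. For the final assertion, setting $\bar\sigma = \sigma$ collapses the first row of \eqref{eq:perm_repn} to $\sigma_i\sigma_{\sigma_i^{-1}(i)} = 1$ and reduces the third and fourth rows to the second, yielding \eqref{eq:perm_repn'}.
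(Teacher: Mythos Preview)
Your argument is correct and follows the same strategy as the paper's proof: necessity by specialising \eqref{eq:RRB0'} to pairs $(\pm e_j,\pm e_k)$, a recursive construction of $\calr$ whose well-definedness reduces case by case to the rows of \eqref{eq:perm_repn}, and then an induction to recover the full Rota--Baxter identity. The only organisational difference is that the paper uses the \emph{left} recursion $\calr(e_i+v')=\sigma_i\,\calr(\sigma_i^{-1}v')$ along paths monotone in $|\,\cdot\,|$ (so that the first row of \eqref{eq:perm_repn} is only invoked later, in the verification of \eqref{eq:RRB0'}), whereas you use the dual \emph{right} recursion over arbitrary paths and absorb that row into the backtracking diamond up front.
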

\begin{proof}
According to Eq.~\eqref{eq:RRB0'}, any relative Rota-Baxter operator $\calr:V_n\to S_n$ satisfies the stated conditions, when we take $\sigma_i=\calr(e_i)$ and $\bar\sigma_i=\calr(-e_i)$ for $1\leq i\leq n$. Conversely, we prove that these two specific tuples $(\sigma_i)_{1\leq i\leq n}$ and $(\bar\sigma_i)_{1\leq i\leq n}$ are sufficient to determine a relative Rota-Baxter operator. We only need to show the existence of such an operator, and Eq.~\eqref{eq:RRB0'} guarantees its uniqueness.

For any $v=\sum_i a_i e_i\in V_n$, let $|v|=\sum_i |a_i|$.
Now define a map $\calr:V_n\to S_n$ as follow. Set $\calr(0)=1$, $\calr(e_i)=\sigma_i$ and $\calr(-e_i)=\bar\sigma_i$ for $1\leq i\leq n$. When $v\neq0$, there exists $1\leq i\leq n$ such that $v=v'\pm e_i$ and $|v|=|v'|+1$, and we recursively define
$$\calr(v)
=\begin{cases}
\sigma_i\calr(\sigma_i^{-1}v'),&\mbox{if }v=v'+ e_i,\\
\bar\sigma_i\calr(\bar\sigma_i^{-1}v'),&\mbox{if }v=v'- e_i,
\end{cases}$$
by induction on $|v|$, as $|\sigma_i^{-1}v'|=|\bar\sigma_i^{-1}v'|=|v'|<|v|$. Such a recursive definition of $\calr$ is independent of the choice of $i$. Indeed, if $v=v'+e_i=v''+e_j=w+e_i+e_j\ (i\neq j)$ and $|v|=|v'|+1=|v''|+1$ (Other cases are similar to check), then
\begin{align*}
\calr(v)=\sigma_i\calr(\sigma_i^{-1}v')
=\sigma_i\calr\left(\sigma_i^{-1}w+e_{\sigma_i^{-1}(j)}\right)
=\sigma_i\sigma_{\sigma_i^{-1}(j)}\calr\left(\sigma_{\sigma_i^{-1}(j)}^{-1}\sigma_i^{-1}w\right).
\end{align*}
By symmetry, we also have
\begin{align*}
\calr(v)
&=\sigma_j\sigma_{\sigma_j^{-1}(i)}\calr\left(\sigma_{\sigma_j^{-1}(i)}^{-1}\sigma_j^{-1}w\right),
\end{align*}
and two expressions give the same element by the compatibility condition~\eqref{eq:perm_repn}.

Next we show that the so-defined operator $\calr:V_n\to G$ is
a relative Rota-Baxter operator satisfying Eq.~\eqref{eq:RRB0'} by induction on $|u|$. When $|u|=0$, namely $u=0$, it is clear.
Otherwise, let $u=u'+ e_i$ and $|u|=|u'|+1$ without loss of generality. Then
\begin{align*}
\calr(u)\calr(\calr(u)^{-1}v)
&=\sigma_i\calr(\sigma_i^{-1}u')\calr\left((\sigma_i\calr(\sigma_i^{-1}u'))^{-1}v\right)\\
&=\sigma_i\calr(\sigma_i^{-1}u')\calr\left(\calr(\sigma_i^{-1}u')^{-1}\sigma_i^{-1}v\right)\\
&=\sigma_i\calr(\sigma_i^{-1}u'+\sigma_i^{-1}v)\\
&=\sigma_i\calr(\sigma_i^{-1}(u'+v)),
\end{align*}
where the first equality is by the definition of $\calr$, and the third one is by the induction hypothesis as $|\sigma_i^{-1}u'|=|u'|<|u|$. If $|u'+v|<|u+v|$, then
$$\calr(u)\calr(\calr(u)^{-1}v)=\sigma_i\calr(\sigma_i^{-1}(u'+v))=\calr((u'+v)+e_i)=\calr(u+v).$$
Otherwise, $u'+v=(u+v)-e_i$ and $|u'+v|=|u+v|+1$, so
\begin{align*}
\calr(u)\calr(\calr(u)^{-1}v)
&=\sigma_i\calr(\sigma_i^{-1}(u'+v))\\
&=\sigma_i\calr(\sigma_i^{-1}(u+v)-e_{\sigma_i^{-1}(i)})\\
&=\sigma_i\bar\sigma_{\sigma_i^{-1}(i)}\calr(\bar\sigma_{\sigma_i^{-1}(i)}^{-1}\sigma_i^{-1}(u+v))\\
&=\calr(u+v),
\end{align*}
where the third equality is by the definition of $\calr$, and the last one is due to the compatibility condition~\eqref{eq:perm_repn}.
\end{proof}

\begin{exam}
Let $V_3$ be the permutation module of $S_3$.
According to Theorem~\ref{thm:perm_repn}, we can figure out all relative Rota-Baxter operators of $S_3$ with respect to the permutation representation on $V_3$. Indeed, any pair of distinct tuples $\{\sigma,\bar\sigma\}$ satisfying Eq.~\eqref{eq:perm_repn} is included in one of the following two cases:
$$
\{((123),(123),(123)),\ ((132),(132),(132))\},\quad
\{((132),(132),(132)),\ ((123),(123),(123))\},
$$
while any tuple $\sigma$ satisfying Eq.~\eqref{eq:perm_repn'} belongs to one of the following 10 cases:
$$\begin{array}{lllll}
((1),(1),(1)), & ((1),(1),(12)), & ((1),(23),(23)), & ((1),(13),(1)), & ((23),(1),(1)),\\[.5em]
((23),(23),(23)), & ((12),(12),(1)), & ((12),(12),(12)), & ((13),(1),(13)), & ((13),(13),(13)).
\end{array}$$

For the case when $n=4$, there are already 88 tuples $\sigma$ satisfying Eq.~\eqref{eq:perm_repn'}. For example, we can take
$$\sigma=((24),(13),(1432),(1234)),$$
and these 4 permutations in $\sigma$ generate a subgroup of $S_4$ isomorphic to the dihedral group $D_4$ of order 8.
\end{exam}


\section{Rota-Baxter operators of weight 0 on (Lie) groups}\label{sec:RB}

In this section, we introduce Rota-Baxter operators of weight 0 on groups. For Lie groups, the approach is more adapted; while for arbitrary groups, the approach is more fancy.

\subsection{Rota-Baxter operators of weight 0 on  groups}

We need some preparation to introduce Rota-Baxter operators of weight $0$ on an arbitrary group $G$, based on the algebraic construction of the Malcev completion of a group due to Quillen~\cite{Qu}. See also the survey~\cite{Me}.


Let $\mathbbm k$ be an algebraically closed field, and  of characteristic 0. First consider the group ring $\mathbbm{k} G$ with the standard Hopf algebra structure  $(\mathbbm{k} G,\cdot,\Delta,\vep,S)$ such that
$$\Delta(x)=x\otimes x,\quad \vep(x)=1,\quad S(x)=x^{-1},\quad\forall x\in G.$$
Let
$$I=\ker\vep={\rm span}_\mathbbm k\{x-e_G\,|\,x\in G\}$$
be the augmentation ideal of $\mathbbm{k} G$.
Denote
$$\widehat{\mathbbm k G}=\invlim \mathbbm k G/I^n
=\left\{x=\sum_{i=0}^\infty x_i\,\bigg|\,\sum_{i=0}^n x_i\in \mathbbm k G/I^n,\ n\geq0\right\}. $$

Since the coproduct $\Delta$ of $\mathbbm k G$ induces a liner map
$$\Delta:\mathbbm k G/I^n\to \bigoplus_{i+j=n}\mathbbm k G/I^i\otimes \mathbbm k G/I^j,$$ and so a coassociative map $\Delta:\widehat{\mathbbm k G}\to \widehat{\mathbbm k G}\,\widehat{\otimes}\,\widehat{\mathbbm k G}$ by taking inverse limit,
the completion $\widehat{\mathbbm k G}$ is a complete Hopf algebra (with unit $\mathbbm 1$) consisting of formal power series of group ring elements; e.g. see such a classical approach in~\cite[Appendix A]{Qu}. Let $\widehat{I}$ be the completion of $I$ in $\widehat{\mathbbm k G}$. Denote $\widehat{G}$ the set of group-like elements in $\widehat{\mathbbm k G}$, and $\widehat{\frakg}$ the primitive Lie algebra of $\widehat{\mathbbm k G}$, namely
\begin{align*}
\widehat{G}&={\rm Gp}(\widehat{\mathbbm k G})=\{f\in \mathbbm 1+\widehat{I}\,|\,\Delta(f)=f\hot f\},\\
\widehat{\frakg}&={\rm Prim}(\widehat{\mathbbm k G})=\{f\in \widehat{I}\,|\,\Delta(f)=f\hot \mathbbm 1 + \mathbbm 1\hot f\}.
\end{align*}

It was proved in \cite[Theorem~3.3]{Ha} that if $H_1(G,\mathbbm k)\simeq I/I^2$ is finite-dimensional (e.g. $G$ is finitely generated), then $\widehat{G}$ is the {\bf prounipotent completion} (also called the {\bf Malcev completion}) of $G$. In fact, let
$\widehat{G}_n=\widehat{G}\cap (\mathbbm 1+\widehat{I}^n)$, then
$\widehat{G}^n=\widehat{G}/\widehat{G}_{n+1}$ is a unipotent algebraic group over $\mathbbm k$ lying in $\mathbbm 1+\widehat{I}/\widehat{I}^{n+1}$, with its Lie algebra $\widehat{\frakg}^n\subset\widehat{I}/\widehat{I}^{n+1}$, such that
$$\widehat{G}\simeq\invlim \widehat{G}^n,\quad \widehat{\frakg}\simeq\invlim \widehat{\frakg}^n.$$
As $\mathbbm k$ is of characteristic 0, the logarithm and exponential functions are well-defined and mutually inverse homeomorphisms,
$$\log:\mathbbm 1+\widehat{I}\to \widehat{I},\quad \exp:\widehat{I}\to \mathbbm 1+\widehat{I},$$
which restrict to
$$\log:\widehat{G}\to \widehat{\frakg},\quad \exp:\widehat{\frakg}\to \widehat{G}.$$
Also, $\widehat{G}$ can act on $\widehat{\frakg}$ by conjugation, so does $G$.
\begin{defn}
Given any group $G$, a map $\mathcal B:\widehat{\frakg}\to G$ is called a {\bf Rota-Baxter operator of weight $\mathbf{0}$} on $G$, if
\begin{equation}\label{eq:RRB0_comp}
\mathcal B(f)\mathcal B(g)=\mathcal B(f+\mathcal B(f)g\mathcal B(f)^{-1}),\quad \forall f,g\in\widehat{\frakg}.
\end{equation}
\end{defn}

\begin{exam}\label{ex:free_ass}
For the free associative algebra $\mathfrak{ass}_n = \mathbbm k\langle x_1,\dots,x_n\rangle$ generated by $x_1,\dots,x_n$, it is a Hopf algebra endowed with the coshuffle coproduct $\Delta$ satisfying
$$\Delta(x_i)=x_i\otimes 1 + 1\otimes x_i,\quad 1\leq i\leq n.$$
The degree completion (with degrees of the generators $x_1,\dots,x_n$ set to be 1)
$$\widehat{\mathfrak{ass}}_n = \mathbbm k\langle\langle x_1,\dots,x_n\rangle\rangle$$
of $\mathfrak{ass}_n$ is a complete Hopf algebra.
Let $\mathfrak{lie}_n$ be the Lie subalgebra of $\mathfrak{ass}_n$  generated by $x_1,\dots,x_n$ using the commutator as its Lie bracket, and $\widehat{\mathfrak{lie}}_n$ the degree completion of $\mathfrak{lie}_n$. Then $\widehat{\mathfrak{ass}}_n$ is precisely the completed universal enveloping algebra of $\widehat{\mathfrak{lie}}_n$ such that
${\rm Prim}(\widehat{\mathfrak{ass}}_n)=\widehat{\mathfrak{lie}}_n$ and
$$\frakG_n={\rm Gp}(\widehat{\mathfrak{ass}}_n)=\exp(\widehat{\mathfrak{lie}}_n)=\left\{e^f=\sum_{k=0}^\infty\dfrac{f^k}{k!}\in \widehat{\mathfrak{ass}}_n\,\bigg|\,f\in \widehat{\mathfrak{lie}}_n\right\}.$$
Moreover, the group $\frakG_n$ is the Malcev completion of the free group in $n$ letters.

Let $\calb:\widehat{\mathfrak{lie}}_n\to \frakG_n$ be a Rota-Baxter operator of weight $0$ on $\frakG_n$. By Eq.~\eqref{eq:RRB0_comp} and the Baker-Campbell-Hausdorff formula, there exists a unique  map $B:\widehat{\mathfrak{lie}}_n\to \widehat{\mathfrak{lie}}_n$ such that $\calb(x)=e^{B(x)}$ and
\begin{equation}\label{eq:B}
 {\rm BCH}(B(x),B(y))=B\left(x+e^{B(x)}ye^{-B(x)}\right)
\end{equation}
for all $x,y\in \widehat{\mathfrak{lie}}_n$, where
$${\rm BCH}(x,y)=\log(e^xe^y)=x+y+\dfrac{1}{2}[x,y]+\dfrac{1}{12}[x,[x,y]]-\dfrac{1}{12}[y,[y,x]]\,\cdots.$$
Note that ${\rm BCH}$ provides a group multiplication $*$ on $\widehat{\mathfrak{lie}}_n$ such that $(\widehat{\mathfrak{lie}}_n,*)$ acts on $(\widehat{\mathfrak{lie}}_n,+)$ by conjugation after lifting to $\frakG_n$, so Eq.~\eqref{eq:B} tells us that $B$ is also a relative Rota-Baxter operator of weight $0$.
\end{exam}

%

\subsection{Rota-Baxter operators of weight $0$ on Lie groups}

For Lie groups, there is a direct way to define Rota-Baxter operators of weight $0$.

\begin{defn}
Given a Lie group $G$ with its associated Lie algebra $\frak g$, a smooth map $\calb:\frak g\to G$ is called a {\bf Rota-Baxter operator of weight $\mathbf{0}$} on $G$, if
\begin{equation}\label{eq:rbo_lie}
\calb(x)\calb(y)=\calb(x+{\rm Ad}_{\calb(x)}y),\quad \forall x,y\in\frak g.
\end{equation}
\end{defn}


\begin{theorem}\label{coro:diff_rbo}
For a Rota-Baxter operator $\calb:\frak g\to G$ of weight $0$ on a Lie group $G$, the differentiation $B=\mathcal B|_{*0}$ of $\mathcal B$ is a Rota-Baxter operator of weight $0$ on $\frak g$, i.e. the following equality holds:
$$[B(x),B(y)]=B([B(x),y]+[x,B(y)]),\quad \forall x,y\in \frak g.$$
\end{theorem}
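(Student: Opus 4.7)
The plan is to expand the defining identity $\calb(x)\calb(y) = \calb(x+\Ad_{\calb(x)}y)$ in a neighborhood of $(0,0)\in\g\times\g$ to second order, and then exploit the symmetry of a Hessian. Setting $x=y=0$ in the identity gives $\calb(0)^2 = \calb(0)$, hence $\calb(0) = e_G$. Pick a local inverse $\log$ of $\exp:\g\to G$ near $e_G$ and define $\beta := \log\circ\,\calb$ in a neighborhood of $0\in\g$. Then $\beta(0)=0$, $d\beta|_0 = B$, and Taylor's theorem gives
$$\beta(u) = B(u) + \tfrac{1}{2}\beta_2(u,u) + O(\|u\|^3)$$
for some symmetric bilinear form $\beta_2\colon\g\times\g\to\g$. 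Applying $\log$ to the Rota-Baxter identity turns it into the purely Lie-algebraic equation
$$\mathrm{BCH}(\beta(x),\beta(y)) \;=\; \beta\bigl(x + \Ad_{\exp(\beta(x))}y\bigr).$$

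Next, rescale $x\mapsto tx$, $y\mapsto ty$ and compare coefficients of $t^2$, using the standard expansions $\mathrm{BCH}(u,v) = u+v+\tfrac12[u,v]+O(3)$ and $\Ad_{\exp(u)}v = v+[u,v]+O(2)$. The left-hand side contributes
$$t\bigl(B(x)+B(y)\bigr) + \tfrac{t^2}{2}\beta_2(x,x) + \tfrac{t^2}{2}\beta_2(y,y) + \tfrac{t^2}{2}[B(x),B(y)] + O(t^3),$$
while the argument of $\beta$ on the right-hand side expands as $t(x+y) + t^2[B(x),y]+O(t^3)$, giving
$$t\bigl(B(x)+B(y)\bigr) + t^2 B([B(x),y]) + \tfrac{t^2}{2}\beta_2(x+y,x+y) + O(t^3).$$
Cancelling the linear term and using the polarization $\beta_2(x+y,x+y) = \beta_2(x,x) + 2\beta_2(x,y) + \beta_2(y,y)$ yields
$$\beta_2(x,y) \;=\; \tfrac{1}{2}[B(x),B(y)] - B([B(x),y]).$$

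Finally, symmetry of $\beta_2$ closes the argument. Equating this expression with its image under $x\leftrightarrow y$ and using $[B(y),B(x)] = -[B(x),B(y)]$ together with $[B(y),x] = -[x,B(y)]$ gives
$$[B(x),B(y)] \;=\; B\bigl([B(x),y] + [x,B(y)]\bigr),$$
which is precisely the Rota-Baxter identity of weight $0$ on $\g$. The only delicate step is careful bookkeeping of the $t^2$ terms in BCH and in $\Ad\circ\exp$; once that coefficient has been correctly isolated, the Rota-Baxter relation on $\g$ emerges automatically from the symmetry of the bilinear form $\beta_2$, and no information about higher Taylor terms of $\beta$ is required.
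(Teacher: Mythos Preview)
Your proof is correct and takes a genuinely different route from the paper's. The paper computes $[B(x),B(y)]$ directly as $\tfrac{\partial^2}{\partial s\,\partial t}\big|_{s,t=0}\,\calb(sx)\calb(ty)\calb(sx)^{-1}$, then applies the group Rota-Baxter identity (together with the inverse formula $\calb(u)^{-1}=\calb(-{\rm Ad}_{\calb(u)^{-1}}u)$) three times to collapse this triple product into $\calb$ of a single argument, and finally differentiates that argument in $s$ and $t$. Your approach instead passes to $\beta=\log\circ\,\calb$, rewrites the identity via BCH, and extracts the Rota-Baxter relation on $\g$ from the \emph{symmetry} of the Hessian $\beta_2$. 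Your method is arguably cleaner: it avoids all manipulations with group products and inverses and reduces the computation to matching $t^2$-coefficients of polynomials in one scaling parameter. The paper's argument, by contrast, stays entirely on the group side and never needs a local logarithm or Taylor expansion of $\calb$. One small point worth making explicit: your expansion is only valid for $x,y$ near $0$, but since both sides of the target identity are bilinear in $(x,y)$, validity on a neighborhood of the origin immediately yields validity on all of $\g$.
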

\begin{proof}
Since $B=\calb|_{\ast 0}$ is the tangent map of $\calb$ at $0$ by viewing $\frakg$ as an abelian Lie group, we have the following relation for sufficiently small $t\in \RR$:
$$\dfrac{d}{dt}\Big|_{t=0}\calb(tx)=\dfrac{d}{dt}\Big|_{t=0}\exp^{tB(x)}=B(x),\quad \forall x\in \frakg.$$

For any $x,y\in \frakg$,
\emptycomment{
and $a,b\in \RR^*$,
\begin{eqnarray*}
R(au+bv) &=& \dfrac{d}{dt}\Big|_{t=0}\calr(tau+tbv)\\
&\stackrel{\eqref{eq:RRB0}}{=}& \dfrac{d}{dt}\Big|_{t=0}\calr(tau)\calr\left(\Phi(\calr(tau)^{-1})tbv\right)\\
&=& \dfrac{d}{dt}\Big|_{t=0}\calr(tau) + \dfrac{d}{dt}\Big|_{t=0}\calr\left(\Phi(\calr(tau)^{-1})tbv\right)\\
&=& a\dfrac{d}{ds}\Big|_{s=0}\calr(su) + b\dfrac{d}{ds}\Big|_{s=0}\calr\left(\Phi(\calr(sab^{-1}u)^{-1})sv\right)\\
&=& aR(u) + b\calr|_{\ast 0}\left(\dfrac{d}{ds}\Big|_{s=0}\Phi(\calr(sab^{-1}u)^{-1})sv\right)\\
&=& aR(u) + bR(v),
\end{eqnarray*}
so $R:V\to \frak g$ is an $\RR$-linear map.
}
we have
\begin{eqnarray*}
[B(x),B(y)] &=& \dfrac{d^2}{dsdt}\Big|_{s,t=0}\exp^{sB(x)}\exp^{tB(y)}\exp^{-sB(x)}\\
&=& \dfrac{d^2}{dsdt}\Big|_{s,t=0} \calb(sx)\calb(ty)\calb(sx)^{-1}\\
&\stackrel{\eqref{eq:rrb-inverse}}{=}& \dfrac{d^2}{dsdt}\Big|_{s,t=0} \calb(sx)\calb(ty)\calb\left(-{\rm Ad}_{\calb(sx)^{-1}}sx\right)\\
&\stackrel{\eqref{eq:RRB0}}{=}& \dfrac{d^2}{dsdt}\Big|_{s,t=0} \calb(sx)\calb\left(ty-{\rm Ad}_{\calb(ty)\calb(sx)^{-1}}sx\right)\\
&\stackrel{\eqref{eq:RRB0}}{=}&
\dfrac{d^2}{dsdt}\Big|_{s,t=0} \calb\left(sx+{\rm Ad}_{\calb(sx)}ty -{\rm Ad}_{\calb(sx)\calb(ty)\calb(sx)^{-1}}sx\right)\\
&=&\calb|_{\ast 0}\left( \dfrac{d^2}{dsdt}\Big|_{s,t=0}
\left(sx+{\rm Ad}_{\calb(sx)}ty -{\rm Ad}_{\calb(sx)\calb(ty)\calb(sx)^{-1}}sx\right)
\right)\\
&=&\calb|_{\ast 0}\left( \dfrac{d^2}{dsdt}\Big|_{s,t=0}
{\rm Ad}_{\calb(sx)}ty -\dfrac{d^2}{dsdt}\Big|_{s,t=0}{\rm Ad}_{\calb(ty)}sx
\right)\\
&=&B([B(x),y]+[x,B(y)]).
\end{eqnarray*}
Thus, $B:\frakg\to \frak g$ is a Rota-Baxter operator of weight $0$ on $\frak g$.
\end{proof}

\begin{exam}\label{ex:adjoint_sl2}
Let $G={\rm SL}(2,\mathbb R)$ with its Lie algebra $\frakg={\rm sl}(2,\mathbb R)$. Then a smooth map $\calb:\frakg\to G$ is a Rota-Baxter operator of weight $0$ if and only if
$$\calb\begin{pmatrix}
a&b\\
c&-a
\end{pmatrix}\calb\begin{pmatrix}
x&y\\
z&-x
\end{pmatrix}=
\calb\left(\begin{pmatrix}
a&b\\
c&-a
\end{pmatrix}+\calb\begin{pmatrix}
a&b\\
c&-a
\end{pmatrix}\begin{pmatrix}
x&y\\
z&-x
\end{pmatrix}\calb\begin{pmatrix}
a&b\\
c&-a
\end{pmatrix}^{-1}\right)$$
for any $a,b,c,x,y,z\in\mathbb R$. For example, there are two special classes of solutions as follows:
$$\quad\calb_s\begin{pmatrix}
a&b\\
c&-a
\end{pmatrix}=\begin{pmatrix}
e^{sa}&0\\
0&e^{-sa}
\end{pmatrix}; \quad \quad\calb'_s\begin{pmatrix}
a&b\\
c&-a
\end{pmatrix}=\begin{pmatrix}
1&sc\\
0&1
\end{pmatrix}$$
for $s\in \mathbb R$. The differentiation of $\calb_s$ and that of $\calb'_s$ are respectively given by
$$B_s\begin{pmatrix}
a&b\\
c&-a
\end{pmatrix}=\begin{pmatrix}
sa&0\\
0&-sa
\end{pmatrix}\ \mbox{ and }\
B'_s\begin{pmatrix}
a&b\\
c&-a
\end{pmatrix}=\begin{pmatrix}
0&sc\\
0&0
\end{pmatrix}.$$
They are also two kinds of Rota-Baxter operators of weight $0$ on ${\rm sl}(2,\mathbb R)$ represented by $P_4$ and $P_5$ respectively in \cite[Theorem 2.1]{BGP}.

\end{exam}


\section{Relative Rota-Baxter operators of weight $0$ and pre-groups}\label{sec:pre}

In this section, we give the explicit relation between relative Rota-Baxter operators of weight 0 and pre-groups.

As introduced in \cite[Definition~2.5]{BGST}, a {\bf pre-group} is a triple $(G,+,\rhd)$, where $(G,+)$ is an abelian group and $\rhd:G\times G\to G$  is a multiplication on $G$ such that
\begin{eqnarray*}
\label{eq:pre-gp-1}
x\rhd(y+z)&=&(x\rhd y) + (x\rhd z),\\
\label{eq:pre-gp-1}
x\rhd(y\rhd z)&=&(x+x\rhd y)\rhd z
\end{eqnarray*}
for all $x,y,z\in G$.

Given a relative Rota-Baxter operator $\calr:V\to G$ on a group $G$ with respect to a $\mathbb Z G$-module $(V,\Phi)$,
one can define two products $\rhd_{\calr}$ and $*_{\calr}$ on $V$ respectively by
\begin{eqnarray}
\label{eq:action}
u \rhd_{\calr} v &=& \Phi(\calr(u))v,\\
\label{eq:descendent}
u *_{\calr} v &=& u+\Phi(\calr(u))v.
\end{eqnarray}

\begin{prop}\label{prop:pre-group}
The triple $(V,+,\rhd_{\calr})$ is a pre-group.
\end{prop}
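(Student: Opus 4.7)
The plan is to verify the two pre-group axioms directly from the definitions, using only that $\Phi$ takes values in $\Aut(V,+)$ and that $\calr$ satisfies the relative Rota-Baxter identity~\eqref{eq:RRB0}.

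First I would dispatch the left distributivity axiom $x\rhd_\calr(y+z) = (x\rhd_\calr y)+(x\rhd_\calr z)$. This is essentially free: by definition $x\rhd_\calr(y+z) = \Phi(\calr(x))(y+z)$, and since $\Phi(\calr(x))\in\Aut(V,+)$ is in particular additive, this splits as $\Phi(\calr(x))y+\Phi(\calr(x))z$, which is exactly $(x\rhd_\calr y)+(x\rhd_\calr z)$. No use of the Rota-Baxter condition is needed here.

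Next I would tackle the associativity-type axiom $x\rhd_\calr(y\rhd_\calr z) = (x+x\rhd_\calr y)\rhd_\calr z$. Expanding the left-hand side using~\eqref{eq:action} twice and the fact that $\Phi$ is a group homomorphism $G\to\Aut(V)$ gives
\begin{equation*}
x\rhd_\calr(y\rhd_\calr z) = \Phi(\calr(x))\Phi(\calr(y))z = \Phi\bigl(\calr(x)\calr(y)\bigr)z.
\end{equation*}
Here the Rota-Baxter identity~\eqref{eq:RRB0} enters: $\calr(x)\calr(y) = \calr(x+\Phi(\calr(x))y) = \calr(x + x\rhd_\calr y)$. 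Substituting and unfolding $\rhd_\calr$ once more gives
\begin{equation*}
\Phi\bigl(\calr(x + x\rhd_\calr y)\bigr)z = (x + x\rhd_\calr y)\rhd_\calr z,
\end{equation*}
which is the desired right-hand side.

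There is no serious obstacle: the first axiom is immediate from $\Phi(\calr(x))$ being an additive automorphism, and the second is a one-line computation that uses the definition of $\Phi$ as a group homomorphism together with~\eqref{eq:RRB0}. The only subtlety to flag is that $(V,+)$ being abelian and $\Phi$ taking values in $\Aut(V,+)$ (rather than just $\End(V,+)$) are both used implicitly — the former via the pre-group axioms themselves, the latter via the additivity of $\Phi(\calr(x))$ in the first step.
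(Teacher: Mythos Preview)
Your proof is correct and follows exactly the same route as the paper: verify left distributivity using that $\Phi(\calr(x))$ is an additive map, then verify the second axiom by expanding with~\eqref{eq:action}, using that $\Phi$ is a group homomorphism, and invoking~\eqref{eq:RRB0}. The computations match the paper's essentially line for line.
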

\begin{proof} For any $u,v,w\in V$, we have
\begin{eqnarray*}
u\rhd_{\calr}(v+w) =\Phi(\calr(u))(v+w)
= \Phi(\calr(u))v+\Phi(\calr(u))w
= u\rhd_{\calr}v+u\rhd_{\calr}w,
\end{eqnarray*}
and
\begin{eqnarray*}
u\rhd_{\calr}(v\rhd_{\calr}w) &=& \Phi(\calr(u))(\Phi(\calr(v))w)\\
&=& \Phi(\calr(u)\calr(v))w\\
&\stackrel{\eqref{eq:RRB0}}{=}& \Phi(\calr(u+\Phi(\calr(u))v))w\\
&=& (u+u\rhd_{\calr}v)\rhd_{\calr}w\\
&=& (u*_{\calr}v)\rhd_{\calr}w.
\end{eqnarray*}
So $(V,+,\rhd_{\calr})$ is a pre-group.
\end{proof}

\begin{prop}\label{prop:rrb_pre-group}
The pair $(V,*_{\calr})$ is a group with unit $0$ and the inverse $u^{\dagger_{\calr}}$ of  $u\in V$ with respect to $*_{\calr}$ is given by
\begin{equation}\label{eq:des_inverse}
u^{\dagger_{\calr}}=-\Phi(\calr(u)^{-1})u.
\end{equation}
Hence, Eq.~\eqref{eq:RRB0} equivalently says that $\calr:(V,*_\calr)\to (G,\cdot)$
is a group homomorphism.

Define the left multiplication $L^\calr_u:V\to V$ for each $u\in V$ by
\begin{equation}\label{eq:left_mult}
L^\calr_u v=u\rhd_{\calr}v,\quad \forall v\in V.
\end{equation}
Then $L^\calr$ is an action of $(V,*_\calr)$ on $(V,+)$. In particular, we have the inverse formula
\begin{equation}\label{eq:left_mult_inverse}
(L^\calr_u)^{-1}=L^\calr_{u^{\dagger_{\calr}}},\quad \forall u\in V.
\end{equation}

We call $(V,*_{\calr})$ the {\bf descendent group} from the relative Rota-Baxter operator $\calr$.
\end{prop}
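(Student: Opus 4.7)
The plan is to verify the group axioms for $(V, *_\calr)$ directly from the defining relation \eqref{eq:RRB0}, then read off the homomorphism statement as essentially a reformulation, and finally derive the action property as a consequence. The three identities \eqref{eq:rrb-unit}, \eqref{eq:rrb-minus}, \eqref{eq:rrb-inverse} already established for $\calr$ will carry almost all of the weight.

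For associativity, I would expand both $(u *_\calr v) *_\calr w$ and $u *_\calr (v *_\calr w)$ using the definition \eqref{eq:descendent}. The first unfolds to $(u + \Phi(\calr(u))v) + \Phi(\calr(u *_\calr v))w$, and applying \eqref{eq:RRB0} to the inner argument of $\Phi$ collapses it to $u + \Phi(\calr(u))v + \Phi(\calr(u)\calr(v))w$. The second unfolds to the same expression by using that $\Phi$ is a group homomorphism into $\Aut(V,+)$; no further Rota-Baxter identity is needed on this side. For the identity axiom, \eqref{eq:rrb-unit} gives $\calr(0)=e_G$, from which $0 *_\calr u = \Phi(e_G)u = u$ and $u *_\calr 0 = u$ are immediate.

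The subtle step is the inverse axiom. For the right inverse one substitutes $u^{\dagger_\calr} = -\Phi(\calr(u)^{-1})u$ into \eqref{eq:descendent} to get $u + \Phi(\calr(u))(-\Phi(\calr(u)^{-1})u) = u - u = 0$, which is formal. The left inverse is the actual obstacle: one needs $(-\Phi(\calr(u)^{-1})u) *_\calr u = 0$, and the term $\Phi(\calr(-\Phi(\calr(u)^{-1})u))u$ has to be dealt with. Here I would invoke \eqref{eq:rrb-inverse}, which gives $\calr(-\Phi(\calr(u)^{-1})u) = \calr(u)^{-1}$; substituting into \eqref{eq:descendent} yields $-\Phi(\calr(u)^{-1})u + \Phi(\calr(u)^{-1})u = 0$ as required.

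Once $(V, *_\calr)$ is known to be a group, the statement that $\calr : (V, *_\calr) \to (G, \cdot)$ is a homomorphism is simply \eqref{eq:RRB0} rewritten as $\calr(u *_\calr v) = \calr(u)\calr(v)$. For the action, observe that
\[
L^\calr_{u *_\calr v} w = \Phi(\calr(u *_\calr v)) w = \Phi(\calr(u)\calr(v))w = L^\calr_u L^\calr_v w,
\]
using the just-proved homomorphism property of $\calr$ together with the fact that $\Phi$ itself is a group homomorphism into $\Aut(V, +)$; and $L^\calr_0 = \Phi(e_G) = \id$ by \eqref{eq:rrb-unit}. Since each $\Phi(\calr(u))$ is an additive automorphism of $V$, $L^\calr$ is a genuine action of $(V, *_\calr)$ on $(V, +)$. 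The inverse formula \eqref{eq:left_mult_inverse} then falls out immediately from $L^\calr_u L^\calr_{u^{\dagger_\calr}} = L^\calr_{u *_\calr u^{\dagger_\calr}} = L^\calr_0 = \id$.
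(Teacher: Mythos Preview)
Your proof is correct and follows essentially the same approach as the paper: verify associativity, unit, and inverse directly from \eqref{eq:descendent} and the Rota-Baxter identity, then read off the homomorphism and action statements. The only tactical difference is in the left-inverse step: you invoke \eqref{eq:rrb-inverse} to get $\calr(u^{\dagger_\calr})=\calr(u)^{-1}$ immediately, whereas the paper instead factors out $\Phi(\calr(u)^{-1})$ and uses the just-proven right-inverse identity $u*_\calr u^{\dagger_\calr}=0$ together with \eqref{eq:RRB0} to reach the same conclusion; your route is slightly shorter.
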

\begin{proof}
By \eqref{eq:RRB0}, we have
\begin{eqnarray*}
(u*_{\calr} v)*_{\calr} w
&=& u*_{\calr} v + \Phi(\calr(u*_{\calr} v))w\\
&=& (u+\Phi(\calr(u))v)+ \Phi(\calr(u)\calr(v))w\\
&=& u+\Phi(\calr(u))(v+\Phi(\calr(v))w)\\
&=& u*_{\calr} (v*_{\calr} w).
\end{eqnarray*}

Note that $\calr(0)=e_G$, so it is easy to see that $0*_{\calr} u=u*_{\calr} 0= u$.

On the other hand,
\begin{eqnarray*}
u*_{\calr}u^{\dagger_{\calr}}&=&u+\Phi(\calr(u))u^{\dagger_{\calr}}\\
&=& u+\Phi(\calr(u))(-\Phi(\calr(u)^{-1})u)\\
&=& u-\Phi(\calr(u)\calr(u)^{-1})u\\
&=& 0,\\[.5em]
u^{\dagger_{\calr}}*_{\calr}u&=&u^{\dagger_{\calr}}+\Phi(\calr(u^{\dagger_{\calr}}))u\\
&=&-\Phi(\calr(u)^{-1})u+\Phi(\calr(u)^{-1})\Phi(\calr(u))\Phi(\calr(u^{\dagger_{\calr}}))u\\
&=&\Phi(\calr(u)^{-1})(-u+\Phi(\calr(u))\Phi(\calr(u^{\dagger_{\calr}}))u)\\
&=&\Phi(\calr(u)^{-1})(-u+\Phi(\calr(u)\calr(u^{\dagger_{\calr}}))u)\\
&\stackrel{\eqref{eq:RRB0}}{=}& \Phi(\calr(u)^{-1})(-u+\Phi(\calr(u *_{\calr} u^{\dagger_{\calr}}))u)\\
&=& \Phi(\calr(u)^{-1})(-u+\Phi(\calr(0))u)\\
&=& \Phi(\calr(u)^{-1})(-u+\Phi(e_G)u)\\
&=& \Phi(\calr(u)^{-1})0\ \\
&=& 0,
\end{eqnarray*}
so $u^{\dagger_{\calr}}$ is the inverse of $u$ with respect to the product $*_{\calr}$.

Therefore, $(V,*_{\calr})$ is a group. All the other statements are straightforward.
\end{proof}

According to \cite[Theorem~4.3]{BGST}, any post-Lie group $(G,\rhd)$ with the smooth multiplication $\rhd$ gives a post-Lie algebra $(\frakg,\triangleright)$ by differentiation. Now any Rota-Baxter operator $\calb$ of weight $0$ on a Lie group $G$ especially provides a pre-Lie group $(\frakg,+,\rhd_\calb)$ by Proposition~\ref{prop:pre-group}. Consequently, we can also consider its differentiation. 
\begin{theorem}\label{thm:diff_pre_lie}
Let $\calb:\frak g\to G$ be a Rota-Baxter operator of weight $0$ on a Lie group $G$.
The differentiation of the pre-Lie group $(\frakg,+,\rhd_\calb)$ is the induced pre-Lie algebra $(\frakg,\triangleright_B)$ from the Rota-Baxter operator $B=\mathcal B|_{*0}$ of weight $0$ on $\frakg$ obtained in Theorem~\ref{coro:diff_rbo}. Namely, we have the following commutative diagram.
$$\xymatrix{(G,\calb) \ar@{->}[rr]^{\rm splitting} \ar@{->}[d]_{\rm differentiation}&& (\frakg,+,\rhd_\calb) \ar@{->}[d]^{\rm differentiation}\\
 (\frakg, B)\ar@{->}[rr]^{\rm splitting} && (\frakg, \triangleright_B)}$$
\end{theorem}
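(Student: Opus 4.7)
The plan is to verify the commutative diagram by unwinding both vertical arrows and performing a short second-order differentiation. With $\Phi=\Ad$, Eq.~\eqref{eq:action} specializes to $x\rhd_\calb y=\Ad_{\calb(x)}y$ for $x,y\in\frakg$, while the pre-Lie algebra $(\frakg,\triangleright_B)$ associated with the Rota-Baxter operator $B$ of weight $0$ on $\frakg$ is given by the classical formula $x\triangleright_B y=[B(x),y]$ (the same formula that is implicitly being split in Theorem~\ref{coro:diff_rbo}). Mirroring the post-Lie-group-to-post-Lie-algebra correspondence of~\cite[Theorem~4.3]{BGST}, I would interpret the right-hand vertical arrow as the mixed second-order differential
\[
x\triangleright y=\frac{\partial^{2}}{\partial s\,\partial t}\Big|_{s=t=0}(sx)\rhd_\calb(ty),\qquad x,y\in\frakg,
\]
taken at the identity $0$ of the underlying abelian Lie group $(\frakg,+)$.

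The computation then reduces to a one-line chain-rule argument. By $\RR$-linearity of $\Ad_g$ in its argument, $(sx)\rhd_\calb(ty)=t\,\Ad_{\calb(sx)}y$, so the inner derivative in $t$ at $0$ yields $\Ad_{\calb(sx)}y$. For the outer derivative in $s$, I would use $\calb(0)=e_G$ from Eq.~\eqref{eq:rrb-unit} together with the definition $B=\calb|_{*0}$ to conclude that $s\mapsto\calb(sx)$ is a smooth curve in $G$ starting at $e_G$ with tangent vector $B(x)$. The standard formula $\frac{d}{ds}\big|_{s=0}\Ad_{g(s)}y=[g'(0),y]$, valid whenever $g(0)=e_G$, then gives
\[
\frac{\partial^{2}}{\partial s\,\partial t}\Big|_{s=t=0}(sx)\rhd_\calb(ty)=[B(x),y]=x\triangleright_B y,
\]
which is precisely the identity asserted by commutativity of the diagram.

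The only genuine obstacle is a bookkeeping one: the theorem implicitly requires a well-defined differentiation functor from pre-Lie groups to pre-Lie algebras, and one must check that this matches, on the Rota-Baxter side, the passage from $\calb$ to $B$ given by Theorem~\ref{coro:diff_rbo}, and, on the splitting side, the two horizontal arrows from Proposition~\ref{prop:pre-group} and Aguiar's splitting. Once conventions are aligned, one should also record that the mixed second-order differential above is automatically additive in $y$ (from the pre-Lie-group axiom $x\rhd_\calb(y+z)=x\rhd_\calb y+x\rhd_\calb z$), and that differentiating the twisted associativity $x\rhd_\calb(y\rhd_\calb z)=(x+x\rhd_\calb y)\rhd_\calb z$ reproduces the left pre-Lie identity. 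Beyond these structural checks, the Rota-Baxter identity Eq.~\eqref{eq:rbo_lie} for $\calb$ is not invoked again; it has already done its work in Theorem~\ref{coro:diff_rbo} and Proposition~\ref{prop:pre-group} to populate the two missing vertices of the diagram.
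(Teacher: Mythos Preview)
Your proposal is correct and follows essentially the same route as the paper: both compute the mixed second derivative $\frac{\partial^2}{\partial s\,\partial t}\big|_{s=t=0}(sx)\rhd_\calb(ty)$ by first using linearity of $\Ad$ to strip off the $t$, and then differentiating $s\mapsto\Ad_{\calb(sx)}y$ at $s=0$ via $\calb(0)=e_G$ and $B=\calb|_{*0}$ to obtain $[B(x),y]$. The paper simply cites \cite[Theorem~4.3]{BGST} for the pre-Lie-group-to-pre-Lie-algebra differentiation you flag as the bookkeeping point, and does not repeat the structural checks you mention at the end.
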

\begin{proof}
The differentiation of the smooth multiplication $\rhd_\calb$ on $\frakg$ is computed as follows:
\begin{eqnarray*}
\dfrac{d}{dt}\Big|_{t=0}\dfrac{d}{ds}\Big|_{s=0}tx\rhd_\calb sy
&\stackrel{\eqref{eq:rbo_lie},\,\eqref{eq:action}}{=}&\dfrac{d}{dt}\Big|_{t=0}\dfrac{d}{ds}\Big|_{s=0}{\rm Ad}_{\calb(tx)}sy\\
&=& \dfrac{d}{dt}\Big|_{t=0}{\rm Ad}_{\calb(tx)}y\\
&=& {\rm ad}_{\calb|_{*0}\left(\frac{d}{dt}|_{t=0}tx\right)}y\\
&=& [B(x),y]_\frakg\\
&=& x \triangleright_B y
\end{eqnarray*}
for any $x,y\in\frakg$. Hence, we get the induced pre-Lie algebra $(\frakg,\triangleright_B)$ from the differentiation $B$ of $\calb$.
\end{proof}

\section{Relative Rota-Baxter operators of weight $0$ and braces}\label{sec:bra}

In this section, we give explicit relation between relative Rota-Baxter operators of weight $0$ and braces.

Recall that a {\bf (left) brace} is a triple $(V,+,\ast)$, where $(V,+)$ is an abelian group and $(V,\ast)$ is a group such that
\begin{equation}\label{eq:brace}
x\ast(y+z)=(x\ast y) + (x\ast z)-x,\quad\forall x,y,z\in V.
\end{equation}
In particular, the zero element $0$ in $V$ is also the unit of $(V,\ast)$, and
$$x\ast(-y)=2x-x*y,\quad\forall x,y\in V.$$

\begin{theorem}\label{thm:brace}
Let $\calr$ be a relative Rota-Baxter operator of weight $0$ on $G$ with respect to the module $(V,\Phi)$. Then the triple $(V,+,*_\calr)$ is a brace, where $*_\calr$ is the multiplication defined in \eqref{eq:descendent}.
\end{theorem}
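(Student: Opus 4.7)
The plan is to observe that almost all of the work has already been done in Section~\ref{sec:pre}, so the proof reduces to a short direct verification of the distributivity axiom~\eqref{eq:brace}.

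First I would note that $(V,+)$ is an abelian group by the very definition of a relative Rota-Baxter operator of weight $0$, and that Proposition~\ref{prop:rrb_pre-group} already identifies $(V,*_\calr)$ as a group with unit $0$. Therefore, the only remaining thing to check is the brace compatibility
$$u*_\calr(v+w) = (u*_\calr v)+(u*_\calr w)-u, \quad \forall u,v,w\in V.$$

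Next I would carry out the computation. The key point is that $\Phi(\calr(u))\in\Aut(V,+)$, so it is additive. Expanding the left-hand side using the definition~\eqref{eq:descendent},
$$u*_\calr(v+w) = u + \Phi(\calr(u))(v+w) = u + \Phi(\calr(u))v + \Phi(\calr(u))w,$$
while the right-hand side expands as
$$(u + \Phi(\calr(u))v) + (u + \Phi(\calr(u))w) - u = u + \Phi(\calr(u))v + \Phi(\calr(u))w,$$
and the two agree.

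There is essentially no obstacle in this proof: the brace axiom is forced by the additivity of each $\Phi(\calr(u))$, once one has the group structure $(V,*_\calr)$ from Proposition~\ref{prop:rrb_pre-group}. The real content of the theorem is the nontrivial construction of the descendent group in Section~\ref{sec:pre}; the compatibility~\eqref{eq:brace} is then just a one-line check. I would close by remarking that this complements Proposition~\ref{prop:pre-group}, giving the brace $(V,+,*_\calr)$ as the natural ``integrated'' counterpart of the pre-group $(V,+,\rhd_\calr)$.
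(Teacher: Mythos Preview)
Your proposal is correct and matches the paper's own proof essentially line for line: the paper also invokes Proposition~\ref{prop:rrb_pre-group} implicitly (by stating ``we only need to check Eq.~\eqref{eq:brace}'') and then verifies the distributivity axiom by the same one-line computation using the additivity of $\Phi(\calr(x))$.
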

\begin{proof}
We only need to check Eq.~\eqref{eq:brace}:
\begin{eqnarray*}
x*_\calr(y+z) &=& x+\Phi(\calr(x))(y+z)\\
&=& x+\Phi(\calr(x))y+\Phi(\calr(x))z\\
&=& (x+\Phi(\calr(x))y)+(x+\Phi(\calr(x))z)-x\\
&=& (x*_\calr y) + (x*_\calr z)-x
\end{eqnarray*}
for any $x,y,z\in G$.
\end{proof}

\begin{exam}
  The derived brace $(\mathbb Z,+,*_t)$ from the relative Rota-Baxter operator given in Example \ref{ex:Z}  is a cyclic brace; see also \cite{Ru}.
\end{exam}

\begin{exam}
Consider the relative Rota-Baxter operator
  $\calr:V\to G,\ (a,x)\mapsto x$ given in Example \ref{ex:semi-direct}. Then     for any $a,b\in A$ and $x,y\in G$, we have
\begin{eqnarray*}
(a,x)*_\calr(b,y)&\stackrel{\eqref{eq:descendent}}{=}&(a,x)+\Phi(\calr(a,x))(b,y)=(a,x)+\Phi(x)(b,y)=(a+\rho(x)b,xy).
\end{eqnarray*}
Namely, the descendent group $(V,*_\calr)$ is just the semidirect product $A\rtimes_\rho G$; see also \cite[Example 1.4]{GV}.

Furthermore, take abelian groups $V=\mathbb Z_m\times\mathbb Z_n$ and $G=\mathbb Z_n$ for $m,n\in\mathbb Z^+$. Since any
$r\in \mathbb Z$ coprime to $m$ and $r^n\equiv 1\pmod m$ provides a group homomorphism $$\chi_r:\mathbb Z_n\to \mathbb Z_m^*,\ [i]\mapsto [r^i],$$
$V$ has the following $\mathbb Z G$-module defined by
$$[j]\cdot_r([k],[l])=([r^jk],[l]),\quad\forall j,k,l\in \mathbb Z.$$
Then the natural projection $t:\mathbb Z_m\times\mathbb Z_n\to \mathbb Z_n$ is a relative Rota-Baxter operator such that the descendent group $(V,*_t)$ is the semidirect product $\mathbb Z_m\rtimes_r\mathbb Z_n$

In particular, given two primes $p>q$ such that $p\equiv 1\pmod q$, there exists
$[r]\in \mathbb Z^*_p$ of multiplicative order $q$. The corresponding braces $(V,+,*_t)$ are a class of braces of cyclic type described in \cite[Theorem 3.4]{AB}.
Note that when $p=3$ and $q=2$, the descendent group $(V,*_t)=Z_3\rtimes_2\mathbb Z_2$ is also isomorphic to the symmetric group $S_3$.
\end{exam}

Relating to relative Rota-Baxter operators and braces, we also have the following notion.
\begin{defn}
Given a group $G$ and a $\mathbb Z G$-module $(V,\Phi)$, a map $\pi:G\to V$ is called a {\bf 1-cocycle}, if it satisfies
\begin{equation}\label{eq:1-cocycle}
\pi(xy)=\pi(x)+\Phi(x)\pi(y),\quad\forall x,y\in G.
\end{equation}
\end{defn}

Given a brace $(V,+,\ast)$, it is well-known that the identity map of $V$ is a bijective 1-cocycle with coefficients in the module $((V,+),\gamma)$ of $(V,\ast)$ defined by
$$\gamma(x)y=-x+x\ast y,\quad\forall x,y\in V,$$
where the induced group homomorphism
$\gamma:(V,*)\to \Aut(V,+)$
is called the {\bf gamma function} on $V$.
On the other hand, we have
\begin{prop}\label{prop:1-cocycle_rrb}
Given a group $G$ and a $\mathbb Z G$-module $(V,\Phi)$, if $\pi:G\to V$ is a
bijective $1$-cocycle with coefficients in $(V,\Phi)$, then $\pi^{-1}:V\to G$
is a relative Rota-Baxter operator.
\end{prop}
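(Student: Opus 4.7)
The plan is to verify the defining identity \eqref{eq:RRB0} for $\calr := \pi^{-1}$ by translating the 1-cocycle condition \eqref{eq:1-cocycle} through the bijection. This is a direct computation, so the proof should be short and the only thing to organise is the substitution.

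First I would fix arbitrary $u,v\in V$ and set $x := \calr(u)$, $y := \calr(v)$, so that by bijectivity $\pi(x)=u$ and $\pi(y)=v$. Applying the 1-cocycle identity \eqref{eq:1-cocycle} to the pair $(x,y)$ gives
\[
\pi(xy) \;=\; \pi(x) + \Phi(x)\pi(y) \;=\; u + \Phi(\calr(u))\,v.
\]
Applying $\calr = \pi^{-1}$ to both sides then yields
\[
xy \;=\; \calr\bigl(u + \Phi(\calr(u))\,v\bigr),
\]
while the left-hand side is simply $\calr(u)\calr(v)$. This is precisely \eqref{eq:RRB0}, so $\calr$ is a relative Rota-Baxter operator of weight $0$ with respect to $(V,\Phi)$.

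There is no real obstacle here: the bijectivity of $\pi$ is used exactly once, to guarantee that every pair $(u,v)\in V\times V$ arises as $(\pi(x),\pi(y))$ for a unique pair $(x,y)\in G\times G$, which is what lets us read the 1-cocycle identity backwards through $\pi^{-1}$. One could optionally remark afterwards that the calculation is a perfect converse to the standard fact that the identity map of a brace $(V,+,*)$ is a bijective 1-cocycle with respect to the gamma function, thereby foreshadowing the characterisation of those braces that arise from relative Rota-Baxter operators discussed in this section.
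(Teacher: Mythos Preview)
Your proof is correct and follows exactly the same route as the paper's: both observe that, via the bijection, the 1-cocycle identity \eqref{eq:1-cocycle} rewrites directly as the relative Rota-Baxter identity \eqref{eq:RRB0} for $\pi^{-1}$. The paper's version is simply terser, stating the equivalence in one line without introducing the auxiliary names $x,y$.
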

\begin{proof}
As $\pi:G\to V$ is bijective, the 1-cocycle condition \eqref{eq:1-cocycle}
is equivalent to say that
$$\pi^{-1}(x)\pi^{-1}(y)=\pi^{-1}(x+\Phi(\pi^{-1}(x))y).$$
So $\pi^{-1}:V\to G$
is a relative Rota-Baxter operator.
\end{proof}

In \cite{CS}, the authors discuss a natural question when
a skew brace can be induced by a Rota-Baxter operator of weight $1$ on a group.
A similar problem for Rota-Baxter operators of weight $0$ on Lie groups is considered as follows. Namely, given a Lie group $G$ and its Lie algebra $\frakg$,
when does a brace structure $(\frakg,+,*)$ on $\frakg$ come from a Rota-Baxter operator $\calb$ of weight $0$ on $G$? In other words, when does the coincidence $*=*_\calb$ happens?

First note that given a brace $(\frakg,+,*)$, if there is a Rota-Baxter operator $\calb$ of weight $0$ on $G$ such that
$*=*_\calb$, then the gamma function $\gamma$ on such a brace $(\frakg,+,*)$ is given by
\begin{eqnarray*}
\gamma(x) &\stackrel{\eqref{eq:rbo_lie},\,\eqref{eq:descendent}}{=}& {\rm Ad}_{\calb(x)},\quad \forall x\in\frak g.
\end{eqnarray*}
So $\gamma$ must have its image inside ${\rm Ad}_G$.
Now we are in the position to give our answer.
\begin{theorem}
Let $G$ be a connected Lie group with the associated Lie algebra $\frakg$. Let $(\frakg,+,*)$ be a brace such that its gamma function $\gamma$ takes values in ${\rm Ad}_G$, namely there is a function $C:\frakg \to G$ such that $\gamma(x) = {\rm Ad}_{C(x)}$ for all $x\in \frakg$. Define a map
$\kappa: \frakg \times \frakg \to Z(G)$
by
$$\kappa(x,y)=C(x)C(y)C(x*y)^{-1},\quad\forall x,y\in \frakg.$$
Then we have
\begin{enumerate}[{\upshape (i)}]
\item
$\kappa$ is a group $2$-cocycle of $(\frakg,*)$ with coefficients in the trivial $(\frakg,*)$-module $Z(G)$, whose cohomology class in
 $H^2((\frakg,*), Z(G))$ does not depend on the choice of $C$.

\item
The following statements are equivalent:

{\upshape (a)} The brace $(\frakg,+,*)$ comes from a Rota-Baxter operator of weight $0$ on $G$.

{\upshape (b)} The cohomology class of $\kappa$ in $H^2((\frakg,*), Z(G))$ is trivial.

\noindent
In this situation, $\kappa=\delta(\lambda)$ as a $2$-coboundary
for some $\lambda:\frakg \to Z(G)$, and the corresponding Rota-Baxter operator $\calb$ of weight $0$ on $G$ is given by
$$\calb(x)=C(x)\lambda(x)^{-1},\quad\forall x\in \frakg.$$

\item
Two Rota-Baxter operators $\calb, \calb'$ of weight $0$ on $G$ yield the same brace $(\frakg,+,*)$ if and only if there exists a group homomorphism $\xi:(\frakg,*)\to Z(G)$ such that $\calb(x)=\calb'(x)\xi(x)$
for all $x\in \frakg$.
\end{enumerate}
\end{theorem}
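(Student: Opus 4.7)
The plan is to recognize this statement as a standard cocycle--obstruction problem for the central extension
$$1 \longrightarrow Z(G) \longrightarrow G \stackrel{{\rm Ad}}{\longrightarrow} {\rm Ad}_G \longrightarrow 1,$$
which exists because $G$ is connected and hence $\ker({\rm Ad})=Z(G)$. By hypothesis $\gamma={\rm Ad}\circ C$, so $C$ is a set-theoretic lift through this extension of the group homomorphism $\gamma\colon(\frakg,*)\to {\rm Ad}_G$, and $\kappa$ is precisely the classical obstruction measuring how far $C$ is from being a homomorphic lift. All three parts of the theorem then fall out of the cohomological formalism for central extensions.

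For part (i), one first checks $\kappa(x,y)\in Z(G)$: applying ${\rm Ad}$ to $C(x)C(y)C(x*y)^{-1}$ yields $\gamma(x)\gamma(y)\gamma(x*y)^{-1}=1$. The $2$-cocycle identity
$$\kappa(x,y)\kappa(x*y,z)=\kappa(x,y*z)\kappa(y,z)$$
is then obtained by a direct telescoping computation: using associativity of $*$ and pushing the central factor $\kappa(y,z)$ past the $C(\cdot)$-factors on the right-hand side, both sides reduce to $C(x)C(y)C(z)C((x*y)*z)^{-1}$. For independence of the cohomology class on $C$, any other lift has the form $C'(x)=\lambda(x)C(x)$ with $\lambda(x)\in Z(G)$, and a short computation yields $\kappa'(x,y)=\kappa(x,y)\cdot\lambda(x)\lambda(y)\lambda(x*y)^{-1}$, so $\kappa'$ and $\kappa$ differ by a $2$-coboundary.

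For part (ii), the direction $(a)\Rightarrow(b)$ takes $C=\calb$ itself: when $*_\calb=*$ we have ${\rm Ad}_{\calb(x)}=\gamma(x)$, so $\calb$ is a valid choice of $C$, and the Rota-Baxter identity \eqref{eq:rbo_lie} together with $*_\calb=*$ gives $\calb(x)\calb(y)=\calb(x*y)$, whence $\kappa\equiv 1$. Conversely, writing $\kappa(x,y)=\lambda(x)\lambda(y)\lambda(x*y)^{-1}$, we set $\calb(x):=C(x)\lambda(x)^{-1}$ and verify, using centrality of the $\lambda$-values, that $\calb(x)\calb(y)=\calb(x*y)$; since ${\rm Ad}_{\calb(x)}={\rm Ad}_{C(x)}=\gamma(x)$, this rewrites as $\calb(x)\calb(y)=\calb(x+{\rm Ad}_{\calb(x)}y)$, so $\calb$ is a Rota-Baxter operator of weight $0$ with $*_\calb=*$. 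For part (iii), if $\calb,\calb'$ both realize the same brace then ${\rm Ad}_{\calb(x)}=\gamma(x)={\rm Ad}_{\calb'(x)}$, so $\xi(x):=\calb'(x)^{-1}\calb(x)\in Z(G)$; the homomorphism property of $\xi$ follows from $\calb$ and $\calb'$ being group homomorphisms $(\frakg,*)\to G$ combined with centrality, and the converse is immediate. The main obstacle is the bookkeeping of central factors in the $2$-cocycle identity and in the change-of-$C$ formula, where centrality must be used repeatedly to reorder terms; a subsidiary technical point is that one should assume $C$ (and hence $\lambda$) smooth so that the constructed $\calb$ is smooth, as required by the Lie group definition of a Rota-Baxter operator.
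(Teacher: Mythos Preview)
Your proof is correct and follows essentially the same cohomological bookkeeping as the paper's own argument: the paper also verifies $\kappa(x,y)\in Z(G)$ via $\ker\mathrm{Ad}=Z(G)$ for connected $G$, states the $2$-cocycle identity as a direct check, and handles (ii) and (iii) by the same lift-and-correct-by-a-central-factor manipulations (in (ii)(a)$\Rightarrow$(b) the paper keeps the original $C$ and writes $\kappa=\delta(C\calb^{-1})$ rather than switching to $C=\calb$, but this is the same content filtered through part (i)). Your explicit framing via the central extension $1\to Z(G)\to G\to\mathrm{Ad}_G\to 1$ is a clean conceptual packaging of what the paper does computationally, and your remark that one should take $C$ (hence $\lambda$) smooth so that $\calb$ is smooth is a legitimate technical point the paper leaves implicit.
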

\begin{proof}
It is a classical fact that the kernel of the adjoint
representation ${\rm Ad}: G\to {\rm GL}(\frakg)$ coincides with the center $Z(G)$ when $G$ is a connected Lie group. On the other hand,
$${\rm Ad}_{C(x*y)}=\gamma(x*y)=\gamma(x)\gamma(y)={\rm Ad}_{C(x)}{\rm Ad}_{C(y)}={\rm Ad}_{C(x)C(y)},\quad\forall x,y\in \frakg,$$
so by definition
 $$\kappa(x,y)=C(x)C(y)C(x*y)^{-1}\in \ker{\rm Ad}=Z(G).$$

For (i), it is   easy to check that $\kappa$ satisfies the $2$-cocycle condition,
$$\kappa(y,z)\kappa(x*y,z)^{-1}\kappa(x,y*z)\kappa(x,y)^{-1}=e_G,\quad\forall x,y\in \frakg.$$
If there exists another function $C'$ such that $\gamma(x) = {\rm Ad}_{C'(x)}$ for all $x\in \frakg$, let
$$\kappa'(x,y)=C'(x)C'(y)C'(x*y)^{-1},\quad\forall x,y\in \frakg,$$
then one can see that $\kappa$ differs from $\kappa'$ by a $2$-coboundary
$\delta(\lambda)$, where
the function $\lambda:\frakg \to Z(G)$ is defined by $\lambda(x)=C(x)C'(x)^{-1}$ for all $x\in \frakg$.

For (ii), if $\calb$ is a Rota-Baxter operator of weight $0$ on $G$ such that
$*=*_\calb$, then $\kappa=\delta(\lambda)$ with $\lambda = C(\cdot)\calb(\cdot)^{-1}$ by taking $C'=\calb$ in the context above. Conversely, if $\kappa=\delta(\lambda)$ for some $\lambda:\frakg \to Z(G)$, let $\calb=C(\cdot)\lambda(\cdot)^{-1}$, then clearly ${\rm Ad}_{\calb(x)} = {\rm Ad}_{C(x)} = \gamma(x)$ and
$$\calb(x)\calb(y)=C(x)C(y)\lambda(y)^{-1}\lambda(x)^{-1}
=C(x*y)\kappa(x,y)\delta(\lambda)(x,y)^{-1}\lambda(x*y)^{-1}=\calb(x*y)$$
for all $x,y\in \frakg$, so $\calb$ is the desired Rota-Baxter operator.

For (iii), if $\calb$ and $\calb'$ are two Rota-Baxter operators of weight $0$ on $G$ such that $*_\calb=*_{\calb'}$, we equivalently have
$$\calb(x)\calb'(x)^{-1}\in \ker{\rm Ad}=Z(G),\quad\forall x\in \frakg,$$
then $\xi=\calb(\cdot)\calb'(\cdot)^{-1}$ is clearly a group homomorphism from $(\frakg,*)$ to $Z(G)$, where $*=*_\calb=*_{\calb'}$. Conversely, if $\calb(\cdot)=\calb'(\cdot)\xi(\cdot)$ for a group homomorphism $\xi:(\frakg,*)\to Z(G)$, then ${\rm Im}\,\xi\subset Z(G)$ implies that
$$x*_\calb y=x+{\rm Ad}_{\calb(x)}y=x+{\rm Ad}_{\calb'(x)\xi(x)}y
=x+{\rm Ad}_{\calb'(x)}y=x*_{\calb'}y,\quad\forall x,y\in \frakg,$$
so $*_\calb=*_{\calb'}$.
\end{proof}

\section{Relative Rota-Baxter operators of weight $0$ and the Yang-Baxter equation}\label{sec:YBE}
Since relative Rota-Baxter operators are deeply related to braces (Theorem~\ref{thm:brace}), they induce set-theoretic solutions
of the Yang-Baxter equation;~e.g. see the original papers \cite{Ru2, Ru0} of Rump and also \cite{CJO}. 
On the other hand, one can derive the same result from the aspect of pre-groups~\cite{BGST}. Both approaches are based on their intrinsic braided group structure~\cite{Lu,Ta}. 

\begin{defn}
Let $X$ be a non-empty set. Let $r\colon X^2\to X^2$ be
a map and write  $r(x,y)=(\sigma_x(y), \tau_y(x))$. We say that
$(X,r)$ is a {\bf non-degenerate involutive set-theoretic solution
of the Yang-Baxter equation} if
\begin{enumerate}[(i)]
\item $r^2=\id_{X^2}$,
\item $\sigma_x,\,\tau_x$ are permutations in the symmetric group $\Sym_X$ of $X$ for all $x\in X$,
\item $r_1r_2r_1=r_2r_1r_2$, where $r_1=r\times \id_X\colon X^3\longrightarrow X^3$
and $r_2=\id_X\times r\colon X^3\longrightarrow X^3$.
\end{enumerate}

\end{defn}

According to \cite[Proposition 2]{CJO}, Condition~(iii) for $(X,r)$ as a non-degenerate involutive set-theoretic solution of the Yang-Baxter equation can be equivalently expressed as follows,
\begin{enumerate}
\item[(iii)']
 $\sigma_x \sigma_{\sigma_x^{-1}(y)}=
\sigma_y \sigma_{\sigma_y^{-1}(x)},\quad\forall x,y\in X$.\label{it:ybs'}
\end{enumerate}
In this situation, we also have $\tau_y(x)=\sigma^{-1}_{\sigma_x(y)}(x)$.

Now we write down the formula for the induced set-theoretic solution
of the Yang-Baxter equation from a relative Rota-Baxter operator by modifying that from a brace given in \cite[Lemma~2~(iii)]{CJO}.
\begin{theorem}
Let $\calr$ be a relative Rota-Baxter operator of weight $0$ on $G$ with respect to the module $(V,\Phi)$, Then $\Upsilon_\calr:V^2\to V^2$ defined by
\begin{equation}\label{eq:YBsol}
\Upsilon_\calr(u,v)=\big(u\rhd_\calr v,(u\rhd_\calr v)^{\dagger_\calr}\rhd_\calr u\big),
\quad \forall u,v\in V,
\end{equation}
is a non-degenerate involutive set-theoretic solution
of the Yang-Baxter equation.
\end{theorem}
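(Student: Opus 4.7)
The plan is to recast $\Upsilon_\calr$ into the standard braided-group form and verify the three defining conditions directly, using the brace structure $(V,+,*_\calr)$ produced by Theorem~\ref{thm:brace} only implicitly through the relative Rota-Baxter identity. Set $\sigma_u := L^\calr_u = \Phi(\calr(u))$, which is an automorphism of $(V,+)$ and hence a permutation of $V$, so the ``$\sigma$-half'' of condition~(ii) is immediate. Using the inverse formula~\eqref{eq:left_mult_inverse}, namely $\sigma_u^{-1} = \sigma_{u^{\dagger_\calr}}$, the second component of $\Upsilon_\calr(u,v)$ rewrites as
\[
(u\rhd_\calr v)^{\dagger_\calr}\rhd_\calr u \;=\; \sigma_{(\sigma_u(v))^{\dagger_\calr}}(u) \;=\; \sigma^{-1}_{\sigma_u(v)}(u),
\]
so that $\Upsilon_\calr(u,v) = (\sigma_u(v),\sigma^{-1}_{\sigma_u(v)}(u))$ takes the standard braided-group shape.

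Next I would verify the remaining properties in the order (i), (iii)', (ii). Involutivity $\Upsilon_\calr^2 = \id_{V^2}$ follows by a one-line substitution: if $(u',v') = \Upsilon_\calr(u,v)$, then $\sigma_{u'}(v') = \sigma_{u'}(\sigma_{u'}^{-1}(u)) = u$, and consequently $\sigma^{-1}_{\sigma_{u'}(v')}(u') = \sigma^{-1}_u(\sigma_u(v)) = v$. For condition~(iii)' the key input is the Rota-Baxter identity~\eqref{eq:RRB0}: composing $\sigma_u$ with $\sigma_{\sigma_u^{-1}(v)}$ and using multiplicativity of $\Phi$ gives
\[
\sigma_u\,\sigma_{\sigma_u^{-1}(v)} \;=\; \Phi\bigl(\calr(u)\calr(\sigma_u^{-1}(v))\bigr) \;=\; \Phi\bigl(\calr(u*_\calr\sigma_u^{-1}(v))\bigr) \;=\; \Phi(\calr(u+v)),
\]
since $u*_\calr\sigma_u^{-1}(v) = u + \Phi(\calr(u))\sigma_u^{-1}(v) = u+v$; the right-hand side is symmetric in $u$ and $v$, so~(iii)' holds. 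Finally, the ``$\tau$-half'' of non-degeneracy drops out for free: since $\Upsilon_\calr$ is an involution of $V^2$ and each $\sigma_u$ is a bijection, each $\tau_v$ must be a permutation of $V$.

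The main obstacle, such as it is, is purely bookkeeping: the descendant inverse $u^{\dagger_\calr}=-\Phi(\calr(u)^{-1})u$ must be carried correctly through $\sigma$, and $\Phi$ pushed through the Rota-Baxter relation with attention to the order of arguments. Conceptually nothing new is needed, as the substantive content is already contained in Theorem~\ref{thm:brace} together with Rump's classical brace-to-YBE correspondence; however, since the statement is phrased directly in terms of $\calr$, writing out the two identities above seems preferable to citing the brace result as a black box.
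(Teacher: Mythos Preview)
Your direct verification is more explicit than what the paper actually does: the paper gives no self-contained proof here, but simply invokes Theorem~\ref{thm:brace} together with the standard brace--to--Yang--Baxter correspondence of Rump and Ced\'o--Jespers--Okni\'nski (their Lemma~2). Your rewriting of $\Upsilon_\calr$ as $(\sigma_u(v),\sigma^{-1}_{\sigma_u(v)}(u))$, your check of involutivity, and your derivation of~(iii)' via $\sigma_u\sigma_{\sigma_u^{-1}(v)}=\Phi(\calr(u+v))$ are all correct and pleasant.

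There is, however, a genuine gap in the last step. The assertion that ``since $\Upsilon_\calr$ is an involution of $V^2$ and each $\sigma_u$ is a bijection, each $\tau_v$ must be a permutation'' is \emph{false} in general. Take $V=\{0,1\}$ with $\sigma_0=\id$ and $\sigma_1=(0\,1)$, and set $\tau_v(u)=\sigma^{-1}_{\sigma_u(v)}(u)$. A two-line check (which you essentially carried out for~(i)) shows that the resulting $r$ is always involutive, yet here $\tau_0(0)=\sigma_0^{-1}(0)=0$ and $\tau_0(1)=\sigma_1^{-1}(1)=0$, so $\tau_0$ is not injective. Involutivity together with left non-degeneracy simply does not force right non-degeneracy; you must use the group structures.

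A clean fix in your framework: from $\sigma_u(v)+u=u*_\calr v$ and the definition of $\tau_v$ one gets $\sigma_u(v)*_\calr\tau_v(u)=u*_\calr v$, hence
\[
\tau_v(u)^{\dagger_\calr}
=v^{\dagger_\calr}*_\calr u^{\dagger_\calr}*_\calr\sigma_u(v)
=v^{\dagger_\calr}*_\calr\bigl(u^{\dagger_\calr}+v\bigr),
\]
using $u^{\dagger_\calr}*_\calr\sigma_u(v)=u^{\dagger_\calr}+\sigma_u^{-1}\sigma_u(v)=u^{\dagger_\calr}+v$ from~\eqref{eq:left_mult_inverse}. The right-hand side is the composite of the bijections $u\mapsto u^{\dagger_\calr}$, $x\mapsto x+v$, and $y\mapsto v^{\dagger_\calr}*_\calr y$; composing once more with $(\cdot)^{\dagger_\calr}$ shows that $\tau_v$ is bijective. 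With this paragraph inserted, your argument becomes a complete and self-contained alternative to the paper's black-box citation.
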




Applying Eqs.~\eqref{eq:action} and \eqref{eq:left_mult_inverse} to Eq.~\eqref{eq:YBsol}, we also obtain the following useful formula.
\begin{coro}
For any relative Rota-Baxter operator $\calr:V\to G$, its associated set-theoretic solution $\Upsilon_\calr$ of the Yang-Baxter equation can also be given by
\begin{equation}\label{eq:YBsol'}
\Upsilon_\calr(u,v)=\big(u\rhd_\calr v,\Phi(\calr(u\rhd_\calr v)^{-1}) u\big),
\quad \forall u,v\in V.
\end{equation}
\end{coro}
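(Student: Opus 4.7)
The plan is straightforward because only the second coordinate of $\Upsilon_\calr(u,v)$ differs between the formula established in the previous theorem and the one asserted in the corollary. Hence I would reduce the claim to showing
$$(u\rhd_\calr v)^{\dagger_\calr}\rhd_\calr u = \Phi(\calr(u\rhd_\calr v)^{-1})u,$$
and obtain this by unwinding the left-multiplication dictionary established in Proposition~\ref{prop:rrb_pre-group}.

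Setting $w = u\rhd_\calr v$ for readability, the left-hand side becomes $w^{\dagger_\calr}\rhd_\calr u = L^\calr_{w^{\dagger_\calr}} u$ by the defining relation \eqref{eq:left_mult} of the left multiplication. The inverse formula \eqref{eq:left_mult_inverse} then rewrites this as $(L^\calr_w)^{-1} u$, and \eqref{eq:action} identifies $L^\calr_w$ with the $\mathbb Z G$-module action $\Phi(\calr(w))$, whose inverse is $\Phi(\calr(w))^{-1} = \Phi(\calr(w)^{-1})$ since $\Phi$ is a group homomorphism. Substituting $w=u\rhd_\calr v$ back yields the desired expression.

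There is no substantive obstacle: the corollary is essentially a bookkeeping reformulation of the theorem through the dictionary between the descendent-group structure $(V,*_\calr)$ and the ambient action of $G$ on $V$, which is precisely the content of Proposition~\ref{prop:rrb_pre-group}. The only point worth flagging is that one must invoke both parts of that dictionary simultaneously, namely the identification of $\rhd_\calr$ with $\Phi\circ\calr$ and the fact that inverting in $(V,*_\calr)$ corresponds to inverting the associated left-multiplication operator; once these are in hand, the verification reduces to a one-line substitution.
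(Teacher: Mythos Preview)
Your proposal is correct and matches the paper's approach exactly: the paper simply states that the corollary follows by applying \eqref{eq:action} and \eqref{eq:left_mult_inverse} to \eqref{eq:YBsol}, which is precisely the substitution you carry out (with \eqref{eq:left_mult} as the intermediate bookkeeping step). There is nothing to add.
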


\medskip

Conversely, let $(X,r)$ be a non-degenerate involutive set-theoretic solution
of the Yang-Baxter equation.
The {\bf structure group} of $(X,r)$ is the
group $G(X,r)$ generated by  the elements of the set $X$ and
subject to the  relations $xy=zt$ for all $x,y,z,t\in X$ such that
$r(x,y)=(z,t)$.

Let $\mathbb{Z}^X$ denote the additive free abelian
group with basis $X$. Consider the natural action of $\Sym_X$ on
$\mathbb{Z}^X$ and the associated semidirect product
$\mathbb{Z}^X\rtimes \Sym_X$. According to \cite[Propositions 2.4 and
2.5]{ESS},
$$
H_r=\langle(x,\sigma_x)\,|\,x\in X\rangle
$$
is a subgroup of $\mathbb{Z}^X\rtimes \Sym_X$ isomorphic to $G(X,r)$, and there exists a function
\begin{equation}\label{eq:structure_group_rrb}
\calr_r:\mathbb{Z}^X\to \Sym_X
\end{equation}
such that $H_r$ is exactly the graph ${\rm Graph}(\calr_r)$ of $\calr_r$, namely $$H_r=\{(a,\calr_r(a))\,|\, a\in \mathbb{Z}^X\}.$$
In particular, $\calr_r(x)=\sigma_x$ for any $x\in X$.



Moreover, define a sum in $H_r$ by
$$(a,\calr_r(a))+(b,\calr_r(b))=(a+b,\calr_r(a+b)),\quad\forall a,b\in \mathbb{Z}^X.$$
Then \cite[Theorem 1]{CJO} tells us that $(H_r,+,\cdot)$ is a brace, called the {\bf underlying brace} of the solution $(X,r)$. Correspondingly, we obtain the following result.


\begin{theorem}\label{thm:syb-rrb}
Let $(X,r)$ be a non-degenerate involutive set-theoretic solution
of the Yang-Baxter equation.
\begin{enumerate}[{\upshape (i)}]
\item
The function $\calr_r$ in \eqref{eq:structure_group_rrb} is a relative Rota-Baxter operator on $\Sym_X$ with respect to its natural action on $\mathbb{Z}^X$.
\item
The projection map
$$\pi:H_r\to\mathbb{Z}^X,\ (a,\calr_r(a))\mapsto a$$
provides an isomorphism between the underlying brace $(H_r,+,\cdot)$ of  $(X,r)$ and $(\mathbb{Z}^X,+,*_{\calr_r})$ as in Theorem~\ref{thm:brace}. Moreover, the restriction of the associated solution $\Upsilon_{\calr_r}$ on $X^2$ coincides with $r$.
\end{enumerate}
\end{theorem}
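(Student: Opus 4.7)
The plan is to read the ESS description of the structure group $G(X,r)$ through the lens of Proposition~\ref{prop:rrb-graph} and Theorem~\ref{thm:brace}. For part~(i), the cited result of \cite{ESS} presents $H_r \leq \mathbb{Z}^X \rtimes \Sym_X$ precisely as the graph of the function $\calr_r$, taking the natural permutation action of $\Sym_X$ on $\mathbb{Z}^X$ as the module $(V,\Phi)$ in our setup. Proposition~\ref{prop:rrb-graph} then immediately yields that $\calr_r$ is a relative Rota-Baxter operator of weight $0$.

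For part~(ii), the projection $\pi$ is trivially a bijection because $\calr_r$ is a function, and additivity is automatic from the definition of the sum on $H_r$. For the group multiplication inherited from the semidirect product, I would expand
\[ (a,\calr_r(a))(b,\calr_r(b)) = \bigl(a+\Phi(\calr_r(a))b,\,\calr_r(a)\calr_r(b)\bigr) = \bigl(a *_{\calr_r} b,\,\calr_r(a *_{\calr_r} b)\bigr), \]
using part~(i) and \eqref{eq:RRB0} in the second equality. Applying $\pi$ shows that $\pi$ transports $\cdot$ to $*_{\calr_r}$, giving the desired brace isomorphism $(H_r,+,\cdot) \simeq (\mathbb{Z}^X,+,*_{\calr_r})$ in the sense of Theorem~\ref{thm:brace}.

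Finally, to check $\Upsilon_{\calr_r}|_{X^2} = r$, I would invoke the alternative formula~\eqref{eq:YBsol'}: for $x,y\in X$ the first coordinate of $\Upsilon_{\calr_r}(x,y)$ is $\Phi(\calr_r(x))(y) = \sigma_x(y)$, and the second is $\Phi(\calr_r(\sigma_x(y))^{-1})(x) = \sigma_{\sigma_x(y)}^{-1}(x)$, which equals $\tau_y(x)$ by the equivalent form (iii)$'$ of the braid condition recalled just after the definition of a solution. All the genuine content sits in the ESS presentation of $G(X,r)$ as a graph-type subgroup of $\mathbb{Z}^X\rtimes\Sym_X$; on top of that, the argument is largely bookkeeping, and the one place that demands care is the final coordinate check, where the equivalence (iii)$\Leftrightarrow$(iii)$'$, valid precisely for non-degenerate involutive solutions, must be invoked to rewrite $\tau_y(x)$ as $\sigma_{\sigma_x(y)}^{-1}(x)$.
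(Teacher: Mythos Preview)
Your proposal is correct and matches the paper's proof essentially line for line: part~(i) via Proposition~\ref{prop:rrb-graph}, the brace isomorphism via the semidirect-product multiplication combined with \eqref{eq:RRB0} and \eqref{eq:descendent}, and the final restriction check via \eqref{eq:YBsol'} together with $\calr_r(x)=\sigma_x$. One small point of precision: the identity $\tau_y(x)=\sigma_{\sigma_x(y)}^{-1}(x)$ that you invoke is not the content of (iii)$'$ itself but rather the consequence of involutivity recorded immediately after it in the paper; your reference to that passage is nonetheless the right one.
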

\begin{proof}
For (i), it is due to Proposition~\ref{prop:rrb-graph}, as $H_r={\rm Graph}(\calr_r)$ is a subgroup of $\mathbb{Z}^X\rtimes \Sym_X$.

For (ii), since $H_r={\rm Graph}(\calr_r)$, it is clear that $\pi$ is bijective. Also,
\begin{eqnarray*}
\pi((a,\calr_r(a))+(b,\calr_r(b))) &=& \pi(a+b,\calr_r(a+b))\\
&=& a+b\ =\ \pi(a,\calr_r(a))+\pi(b,\calr_r(b)),\\[.5em]
\pi((a,\calr_r(a))(b,\calr_r(b)))&=& \pi(a+\calr_r(a)b,\calr_r(a)\calr_r(b))\\
&\stackrel{\eqref{eq:descendent},\,\eqref{eq:RRB0}}{=} &  \pi(a*_{\calr_r} b,\calr_r(a*_{\calr_r} b))\\
&=&  a*_{\calr_r} b\ =\  \pi(a,\calr_r(a))*_{\calr_r} \pi(b,\calr_r(b)).
\end{eqnarray*}
Hence, $\pi$ is an isomorphism of braces. Next for any $x,y\in X$, as $\calr_r(x)=\sigma_x$, we have
\begin{eqnarray*}
\Upsilon_{\calr_r}(x,y) \stackrel{\eqref{eq:YBsol'}}{=}  \big(x\rhd_{\calr_r} y,\calr_r(x\rhd_{\calr_r} y)^{-1}x\big)  \stackrel{\eqref{eq:action}}{=}
\big(\sigma_x(y),\sigma^{-1}_{\sigma_x(y)}(x)\big)= (\sigma_x(y),\tau_y(x)) \ =\ r(x,y),
\end{eqnarray*}
so we have checked that $\Upsilon_{\calr_r}|_{X^2}=r$.
\end{proof}

\begin{remark}
Theorem~\ref{thm:perm_repn} for permutation modules over symmetric groups actually reflects such a construction of relative Rota-Baxter operators.
\end{remark}

\begin{coro}
The inverse map of $\pi$ in Theorem~\ref{thm:syb-rrb} is a relative Rota-Baxter operator on $H_r$ with respect to its adjoint action on $\mathbb{Z}^X$ inside $\mathbb{Z}^X\rtimes \Sym_X$, namely
$${\rm Ad}_{(a,\calr_r(a))}b\ =\ \calr_r(a)b,\quad\forall a,b\in \mathbb{Z}^X.$$
\end{coro}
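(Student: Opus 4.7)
The plan is to reduce the statement to Proposition~\ref{prop:1-cocycle_rrb} by recognising $\pi$ as a bijective $1$-cocycle, with only a short preliminary computation needed to identify the module structure.

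First I would verify the formula for the adjoint action by a direct computation inside $\mathbb{Z}^X\rtimes \Sym_X$. Using the multiplication $(a,\sigma)(b,\tau)=(a+\sigma b,\sigma\tau)$ and the inverse $(a,\sigma)^{-1}=(-\sigma^{-1}a,\sigma^{-1})$, conjugating $(b,e)$ by $(a,\calr_r(a))$ yields $(\calr_r(a)b,e)$, so that under the identification of $\mathbb{Z}^X$ with the normal abelian subgroup of the semidirect product we obtain ${\rm Ad}_{(a,\calr_r(a))}b=\calr_r(a)b$ as claimed. This exhibits $(\mathbb{Z}^X,{\rm Ad})$ as a genuine $\mathbb Z H_r$-module.

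Next I would show that the projection $\pi:H_r\to\mathbb{Z}^X$, $(a,\calr_r(a))\mapsto a$, is a $1$-cocycle with coefficients in this $\mathbb Z H_r$-module. Indeed, from Theorem~\ref{thm:syb-rrb}(i) the function $\calr_r$ satisfies the relative Rota-Baxter identity, hence
$$\pi\bigl((a,\calr_r(a))(b,\calr_r(b))\bigr)=\pi\bigl(a+\calr_r(a)b,\calr_r(a)\calr_r(b)\bigr)=a+\calr_r(a)b=\pi(a,\calr_r(a))+{\rm Ad}_{(a,\calr_r(a))}\pi(b,\calr_r(b)),$$
which is exactly condition \eqref{eq:1-cocycle}. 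Since $H_r={\rm Graph}(\calr_r)$, $\pi$ is manifestly bijective.

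Finally I would invoke Proposition~\ref{prop:1-cocycle_rrb} applied to the bijective $1$-cocycle $\pi:H_r\to\mathbb{Z}^X$, which immediately yields that $\pi^{-1}:\mathbb{Z}^X\to H_r$ is a relative Rota-Baxter operator on $H_r$ with respect to the adjoint action. There is essentially no obstacle: the only mildly delicate point is writing the conjugation in the semidirect product carefully enough to match the action with $\calr_r(a)b$, after which everything is a rerun of the equivalence between bijective $1$-cocycles and relative Rota-Baxter operators already established. Alternatively, one can bypass Proposition~\ref{prop:1-cocycle_rrb} and verify the relation $\pi^{-1}(a)\pi^{-1}(b)=\pi^{-1}(a+{\rm Ad}_{\pi^{-1}(a)}b)$ by a one-line calculation using the multiplication in $H_r$ and the identity $\calr_r(a)\calr_r(b)=\calr_r(a+\calr_r(a)b)$.
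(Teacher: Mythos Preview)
Your proposal is correct and takes essentially the same approach as the paper: both verify that $\pi$ is a bijective $1$-cocycle of $H_r$ with coefficients in $(\mathbb{Z}^X,{\rm Ad})$ and then invoke Proposition~\ref{prop:1-cocycle_rrb}. The only cosmetic difference is that the paper routes the computation of $\pi((a,\calr_r(a))(b,\calr_r(b)))$ through the brace isomorphism of Theorem~\ref{thm:syb-rrb}(ii) and Eq.~\eqref{eq:descendent}, whereas you compute directly in the semidirect product; the two are equivalent.
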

\begin{proof}
According to Theorem~\ref{thm:syb-rrb}, we have
\begin{eqnarray*}
\pi((a,\calr_r(a))(b,\calr_r(b))) &=& \pi(a,\calr_r(a))*_{\calr_r} \pi(b,\calr_r(b))\\
&\stackrel{\eqref{eq:descendent}}{=}& \pi(a,\calr_r(a))+\calr_r(a)\pi(b,\calr_r(b))\\
&=& \pi(a,\calr_r(a))+{\rm Ad}_{(a,\calr_r(a))}\pi(b,\calr_r(b)),
\end{eqnarray*}
so $\pi$ is a bijective $1$-cocycle of $H_r$ with coefficients in $(\mathbb{Z}^X,{\rm Ad})$. By Proposition~\ref{prop:1-cocycle_rrb}, we know that the inverse map $\pi^{-1}$
is a relative Rota-Baxter operator.
\end{proof}


\medskip
Next we apply formula \eqref{eq:YBsol'} to the aforementioned examples of relative Rota-Baxter operators to write down their induced set-theoretic solutions
of the Yang-Baxter equation.

\begin{exam}

Consider the vector representation $V=\mathbb R^2$ of the Lie group $G={\rm SL}(2,\mathbb R)$. Then a smooth map $\calr:V\to G$ is a relative Rota-Baxter operator if and only if
$$\calr\begin{pmatrix}
x\\
y
\end{pmatrix}\calr\begin{pmatrix}
z\\
w
\end{pmatrix}=
\calr\left(\begin{pmatrix}
x\\
y
\end{pmatrix}+\calr\begin{pmatrix}
x\\
y
\end{pmatrix}\cdot\begin{pmatrix}
z\\
w
\end{pmatrix}\right),\quad \forall x,y,z,w\in\mathbb R.$$

One simple solution is by setting
$$\calr\begin{pmatrix}
x\\
y
\end{pmatrix}=\begin{pmatrix}
1&f(x,y)\\
0&1
\end{pmatrix}$$
for some $f\in C^\infty(\mathbb R^2)$.
In this situation, we actually get a class of relative Rota-Baxter operators
$$\calr_s:V\to G,\ \begin{pmatrix}
x\\
y
\end{pmatrix}\mapsto \begin{pmatrix}
1&sy\\
0&1
\end{pmatrix}.$$
and its differentiation $R_s:V\to \frakg$ is given by $R_s\begin{pmatrix}
x\\
y
\end{pmatrix}=\begin{pmatrix}
0&sy\\
0&0
\end{pmatrix}$, where $\frakg={\rm sl}(2,\mathbb R)$.

We have
\begin{align*}
&\begin{pmatrix}
x\\
y
\end{pmatrix}\rhd_{\calr_s}
\begin{pmatrix}
z\\
w
\end{pmatrix}
=\begin{pmatrix}
1&sy\\
0&1
\end{pmatrix}\begin{pmatrix}
z\\
w
\end{pmatrix}
=\begin{pmatrix}
z+syw\\
w
\end{pmatrix},
\end{align*}
\begin{align*}
&\calr_s\left( \begin{pmatrix}
x\\
y
\end{pmatrix} \rhd_{\calr_s} \begin{pmatrix}
z\\
w
\end{pmatrix} \right)^{-1}\begin{pmatrix}
x\\
y
\end{pmatrix}
=\begin{pmatrix}
1&-sw\\
0&1
\end{pmatrix}\begin{pmatrix}
x\\
y
\end{pmatrix}
=\begin{pmatrix}
x-syw\\
y
\end{pmatrix}.
\end{align*}
So we have the following  set-theoretic solutions
of the Yang-Baxter equation on $\mathbb R^2$:
$$\Upsilon_{\calr_s}\left(\begin{pmatrix}
x\\
y
\end{pmatrix},
\begin{pmatrix}
z\\
w
\end{pmatrix}\right)
=
\left(\begin{pmatrix}
z+syw\\
w
\end{pmatrix},
\begin{pmatrix}
x-syw\\
y
\end{pmatrix}\right)
.$$
\end{exam}

\begin{exam}
In Example~\ref{ex:adjoint_sl2}, we give the following Rota-Baxter operators of weight $0$ on ${\rm SL}(2,\mathbb R)$,
$$\quad\calb_s\begin{pmatrix}
a&b\\
c&-a
\end{pmatrix}=\begin{pmatrix}
e^{sa}&0\\
0&e^{-sa}
\end{pmatrix}; \quad \quad\calb'_s\begin{pmatrix}
a&b\\
c&-a
\end{pmatrix}=\begin{pmatrix}
1&sc\\
0&1
\end{pmatrix}$$
for $s\in \mathbb R$,
and then
\begin{align*}
&\begin{pmatrix}
a&b\\
c&-a
\end{pmatrix}\rhd_{\calb_s}
\begin{pmatrix}
x&y\\
z&-x
\end{pmatrix}
=\begin{pmatrix}
e^{sa}&0\\
0&e^{-sa}
\end{pmatrix}
\begin{pmatrix}
x&y\\
z&-x
\end{pmatrix}
\begin{pmatrix}
e^{-sa}&0\\
0&e^{sa}
\end{pmatrix}
=\begin{pmatrix}
x&e^{2sa}y\\
e^{-2sa}z&-x
\end{pmatrix},\\
&{\rm Ad}_{\calb_s\left(\begin{pmatrix}
a&b\\
c&-a
\end{pmatrix}\rhd_{\calb_s}
\begin{pmatrix}
x&y\\
z&-x
\end{pmatrix}\right)^{-1}}
\begin{pmatrix}
a&b\\
c&-a
\end{pmatrix}=
{\rm Ad}_{\begin{pmatrix}
e^{-sx}&0\\
0&e^{sx}
\end{pmatrix}}
\begin{pmatrix}
a&b\\
c&-a
\end{pmatrix}
\\
&\qquad=
\begin{pmatrix}
e^{-sx}&0\\
0&e^{sx}
\end{pmatrix}
\begin{pmatrix}
a&b\\
c&-a
\end{pmatrix}
\begin{pmatrix}
e^{sx}&0\\
0&e^{-sx}
\end{pmatrix}=\begin{pmatrix}
a & be^{-2sx}\\
ce^{2sx} & -a
\end{pmatrix}.
\end{align*}
So we have the following  set-theoretic solutions
of the Yang-Baxter equation on ${\rm SL}(2,\mathbb R)$:
$$\Upsilon_{\calb_s}\left(\begin{pmatrix}
a&b\\
c&-a
\end{pmatrix},
\begin{pmatrix}
x&y\\
z&-x
\end{pmatrix}\right)
=
\left(\begin{pmatrix}
x & ye^{2sa}\\
ze^{-2sa} & -x
\end{pmatrix},
\begin{pmatrix}
a & be^{-2sx}\\
ce^{2sx} & -a
\end{pmatrix}\right)
.$$
Analogously, we have the following  set-theoretic solutions
of the Yang-Baxter equation on ${\rm SL}(2,\mathbb R)$:
$$\Upsilon_{\calb'_s}\left(\begin{pmatrix}
a&b\\
c&-a
\end{pmatrix},
\begin{pmatrix}
x&y\\
z&-x
\end{pmatrix}\right)
=
\left(\begin{pmatrix}
x+scz & y-2scx-s^2c^2z\\
z & -x-scz
\end{pmatrix},
\begin{pmatrix}
a-scz & b+2saz-s^2z^2c\\
c & -a+scz
\end{pmatrix}\right)
.$$
\end{exam}

\begin{exam}
For the associated solution $\Upsilon_{\calr_\sigma}$ of a relative Rota-Baxter operator $\calr_\sigma$ defined in
Theorem~\ref{thm:perm_repn}, its restriction on the basis $X_n=\{e_i\}_{1\leq i\leq n}$ is given as follows:
\begin{align*}
\Upsilon_{\calr_\sigma}(e_i,e_j)=\big(e_i\rhd_{\calr_\sigma} e_j,\Phi(\calr_\sigma(e_i\rhd_{\calr_\sigma} e_j)^{-1})e_i\big) =\big(e_{\sigma_i(j)},\Phi(\sigma^{-1}_{\sigma_i(j)})e_i\big) =\big(e_{\sigma_i(j)},e_{\sigma^{-1}_{\sigma_i(j)}(i)}\big).
\end{align*}
\end{exam}

\begin{exam}
In Example~\ref{ex:semi-direct}, we give a kind of relative Rota-Baxter operators
 $$\calr:V\to G,\ (a,x)\mapsto x,$$
where $V=A\times G$ for an abelian group $(G,\cdot)$, a $\mathbb Z G$-module $(A,+,\rho)$ and the group action $\Phi:G\to\Aut(V)$ defined by $\Phi(x)(a,y)=(\rho(x)a,y)$. Then we have
\begin{align*}
 (a,x)\rhd_\calr(b,y)=\Phi(x)(b,y)=(\rho(x)b,y),
\end{align*}
and
\begin{align*}
 \Phi(\calr((a,x)\rhd_\calr(b,y))^{-1})(a,x)=
\Phi(\calr(\rho(x)b,y)^{-1})(a,x) =\Phi(y^{-1})(a,x)=(\rho(y)^{-1}a,x).
\end{align*}
So we have the following  set-theoretic solutions
of the Yang-Baxter equation on $V$:
$$\Upsilon_\calr((a,x),(b,y))=
((\rho(x)b,y),(\rho(y)^{-1}a,x)).
$$
\end{exam}

\section{Relative Rota-Baxter operators of weight $0$ and $T$-structures}\label{sec:T}
In \cite[Appendix]{ESS}, the authors introduced the notion of $T$-structure on abelian groups deeply related to bijective 1-cocycles.  A permutation map $T$ of an abelian group $A$ is called a {\bf $T$-structure} on $A$ if
\begin{equation}\label{eq:T-structure}
T(ka)=kT^k(a),\quad\forall a\in A,\,k\in\mathbb Z.
\end{equation}

Any bijective 1-cocycle $\pi:G\to A$ with coefficients in a $\mathbb Z G$-module $(A,\rho)$ defines a $T$-structure $T$ on $A$ by
\begin{equation}\label{eq:cocycle-T-struc}
T(a)=\rho(\pi^{-1}(a)^{-1})a,\quad\forall a\in A.\end{equation}

On the other hand, according to \cite[Theorem A.7]{ESS}, any $T$-structure $T$ on a cyclic group $A=\langle 1\rangle$ completely determines a bijective cocycle datum $(G,A,\rho,\pi)$ as follow. Let $G=A$ as a set and $\pi:G\to A$ be the identity map. Interpret $A$ as a quotient ring of $\mathbb Z$ and define the map $\rho:G\to\Aut(A)$ by
$$\rho(m)n=nT^{-T(m)}(1),\quad\forall m,n\in \mathbb Z.$$
Also, define the multiplication $\ast$ on $G$ by
$$m\ast n=m+\rho(m)n=m+nT^{-T(m)}(1),\quad\forall m,n\in \mathbb Z.$$
Then $(A,\rho)$ is a module of $(G,\ast)$, and $\pi:G\to A$ is a bijective 1-cocycle with coefficients in $(A,\rho)$.
 Note that the multiplication $\ast$ of $G$ defined here is the opposite of $\circledcirc $ given in the proof of \cite[Theorem A.7]{ESS}, since our 1-cocycle condition \eqref{eq:1-cocycle} is slightly different from there.  The classification of $T$-structures on cyclic groups, or equivalently cyclic braces, has been completed in \cite{Ru}.

The next result gives the relationship between $T$-structures and relative Rota-Baxter operators.
\begin{theorem}
Let $\calr$ be a relative Rota-Baxter operator of weight $0$ on $G$ with respect to the module $(V,\Phi)$. Then the map $T_\calr:V\to V$ defined by
\begin{equation}\label{eq:RRBO-T-map}
T_\calr(v)=-v^{\dagger_{\calr}}\stackrel{ \eqref{eq:des_inverse}}{=}\Phi(\calr(v)^{-1})v,\quad\forall v\in V,
\end{equation}
is a $T$-structure on $V$.
\end{theorem}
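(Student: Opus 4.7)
My plan is to deduce this theorem from the bijective $1$-cocycle framework recalled just before the statement, rather than verify the two defining conditions of a $T$-structure by direct computation.

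The first step is to exhibit a natural bijective $1$-cocycle attached to $\calr$. By Proposition~\ref{prop:rrb_pre-group}, the pair $(V,*_\calr)$ is a group and the assignment $\rho(v)=\Phi(\calr(v))$ gives an action of $(V,*_\calr)$ on the abelian group $(V,+)$ (this is precisely the left multiplication action $L^\calr$ in \eqref{eq:left_mult}). I will then observe that the identity map
$$\pi=\id_V:(V,*_\calr)\longrightarrow (V,+)$$
is a bijective $1$-cocycle with coefficients in the $\mathbb Z(V,*_\calr)$-module $((V,+),\rho)$: the $1$-cocycle identity \eqref{eq:1-cocycle} reads $\id(v*_\calr w)=\id(v)+\rho(v)\id(w)$, which is literally the definition \eqref{eq:descendent} of $*_\calr$.

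The second step is to invoke the fact, recalled from~\cite[Appendix]{ESS} just before the theorem, that every bijective $1$-cocycle produces a $T$-structure through the formula \eqref{eq:cocycle-T-struc}. Applied to $\pi=\id_V$, this formula reads
$$T(v)=\rho(\pi^{-1}(v)^{-1})v=\rho(v^{\dagger_\calr})v=\Phi\!\bigl(\calr(v^{\dagger_\calr})\bigr)v,$$
where the inverse in the source group $(V,*_\calr)$ is given by \eqref{eq:des_inverse}.

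The third and final step is to identify this $T$ with $T_\calr$. By Proposition~\ref{prop:rrb_pre-group}, the relative Rota-Baxter operator $\calr:(V,*_\calr)\to(G,\cdot)$ is a group homomorphism, hence $\calr(v^{\dagger_\calr})=\calr(v)^{-1}$. Substituting this into the formula for $T$ yields $T(v)=\Phi(\calr(v)^{-1})v$, which is exactly $T_\calr(v)$ by \eqref{eq:RRBO-T-map}. In particular $T_\calr$ is a bijection and satisfies $T_\calr(kv)=kT_\calr^{k}(v)$ for every $k\in\mathbb Z$. There is no real obstacle here; the only point that needs care is bookkeeping the two inverses (the group inverse in $(V,*_\calr)$ versus the additive inverse in $(V,+)$) when applying \eqref{eq:cocycle-T-struc}, and the passage $\calr(v^{\dagger_\calr})=\calr(v)^{-1}$ is what makes the two formulas match.
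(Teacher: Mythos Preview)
Your proof is correct, but it is not the route taken in the paper's main proof. The paper argues directly: it first observes that $T_\calr$ is a bijection (from $T_\calr(v)=-v^{\dagger_\calr}$, a composition of the $*_\calr$-inverse with negation), and then proves by induction on $n$ the stronger identity
\[
\Phi(\calr(nv)^{-1})\,mv \;=\; m\,T_\calr^{\,n}(v),\qquad m,n\in\mathbb N,
\]
from which the $T$-structure equation follows by setting $m=n$. Your argument instead packages everything into the bijective $1$-cocycle $\id_V:(V,*_\calr)\to(V,+)$ and then quotes the Etingof--Schedler--Soloviev result \eqref{eq:cocycle-T-struc}; this is exactly the alternative approach the paper records in the Remark immediately following the theorem. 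The trade-off is clear: your route is shorter and more conceptual but outsources the actual verification of \eqref{eq:T-structure} to \cite{ESS}, whereas the paper's inductive proof is self-contained and yields the auxiliary two-parameter formula above as a by-product.
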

\begin{proof}
First by definition we see that the so-defined map $T_\calr:V\to V$ is a bijection, and we abbreviate it as $T$ in the proof.

Next we prove the following formula
\begin{equation}\label{eq:RRBO-T-struc}
\Phi(\calr(nv)^{-1})mv=mT_\calr^n(v),\quad\forall v\in V,\ m,n\in \mathbb N,
\end{equation} by induction on $n$.
The case when $n=0$ is trivial. Now suppose that it holds when $n\leq k$.
Taking inverses of the both hand sides of Eq.~\eqref{eq:RRB0} and then applying $\Phi$, we get
$$\Phi(\calr(v)^{-1})\Phi(\calr(u)^{-1})=\Phi(\calr(u+\Phi(\calr(u))v)^{-1}).$$
Setting $v=kT(u)$ and applying to $mu$, we have
$$\Phi(\calr(kT(u))^{-1})\Phi(\calr(u)^{-1})mu=\Phi(\calr(u+\Phi(\calr(u))kT(u))^{-1})mu.$$
By the induction hypothesis, we see that the LHS is
\begin{eqnarray*}
\Phi(\calr(kT(u))^{-1})\Phi(\calr(u)^{-1})mu &=& \Phi(\calr(kT(u))^{-1})(m\Phi(\calr(u)^{-1})u)\\
&=& \Phi(\calr(kT(u))^{-1})(mT(u))\\
&=& mT^k(T(u)) = mT^{k+1}(u),
\end{eqnarray*}
while the RHS is
\begin{eqnarray*}
\Phi(\calr(u+\Phi(\calr(u))kT(u))^{-1})mu &=& \Phi(\calr(u+k\Phi(\calr(u))T(u))^{-1})mu\\
&=&  \Phi\left(\calr\left(u+k\Phi(\calr(u))\Phi(\calr(u)^{-1})u\right)^{-1}\right)mu\\
&=& \Phi\left(\calr((k+1)u)^{-1}\right)mu.
\end{eqnarray*}
Letting $m=n$ in Eq.~\eqref{eq:RRBO-T-struc}, that is exactly Eq.~\eqref{eq:T-structure}, so $T_\calr$ is a $T$-structure on $V$.
\end{proof}
\begin{remark}
For a relative Rota-Baxter operator $\calr$ on $G$ with respect to the module $(V,\Phi)$, we have known that $\id_V:(V,*_\calr)\to (V,+)$ is a bijective 1-cocycle with coefficients in $(V,\rhd_\calr)$. Hence,
we can also see that
$$T_\calr(v)
\stackrel{\eqref{eq:cocycle-T-struc}}{=}
v^{\dagger_\calr}\rhd_\calr v
\ \left(\stackrel{\eqref{eq:left_mult_inverse}}{=}
(L^\calr_v)^{-1}v\right)
=-v^{\dagger_\calr}$$
defines a $T$-structure $T_\calr$ on $V$ by this way.
\end{remark}

\vspace{0.1cm}
 \noindent
{\bf Acknowledgements. } This research is supported by NSFC (Grant Nos.   12071094, 12371029) and Guangdong Basic and Applied Basic Research Foundation (Grant No. 2022A1515010357).

\bibliographystyle{amsplain}

\end{document}